\def\thmhead@plain#1#2#3{%
 \thmname{#1}\thmnumber{\@ifnotempty{#1}{
 }#2}%
 \thmnote{ \the\thm@notefont(#3)}}
\let\thmhead\thmhead@plain
\def\swappedhead#1#2#3{%
 \thmnumber{#2}\thmname{\@ifnotempty{#2}{. }#1}%
 \thmnote{ \the\thm@notefont(#3)}}
\theoremstyle{definition} %%% for statements in roman typeface
\newtheorem{definition}{Definition}[section]
\theoremstyle{plain}      %%% for statements in italic typeface
\newtheorem{proposition}[definition]{Proposition}
\newtheorem{theorem}[definition]{Theorem}
\newtheorem{corollary}[definition]{Corollary}
\newtheorem{lemma}[definition]{Lemma}
\begin{document}

\begin{center}
{\Large Hadamard Matrices, Quaternions, and the Pearson Chi-square
Statistic\\}\vspace{9pt}

 {\large Abbas M. Alhakim}\vspace{6pt}

 {\large American University of Beirut}\vspace{6pt}

 {\large Email: aa145@aub.edu.lb}
\end{center}
\begin{abstract}
We present a symbolic decomposition of the Pearson chi-square
statistic with unequal cell probabilities,  by presenting
Hadamard-type matrices whose columns are eigenvectors of the
variance-covariance matrix of the cell counts. All of the
eigenvectors have non-zero values so each component test uses all
cell probabilities in a way that makes it intuitively interpretable.

When all cell probabilities are distinct and unrelated we establish
that such decomposition is only possible when the number of
multinomial cells is a small power of $2$. For higher powers of $2$,
we show, using the theory of orthogonal designs, that the targeted
decomposition is possible when appropriate relations are imposed on
the cell probabilities, the simplest of which is when the
probabilities are equal and the decomposition is reduced to the one
obtained by Hadamard matrices.  Simulations are given to illustrate
the sensitivity of various components to changes in location, scale
skewness and tail probability, as well as to illustrate the
potential improvement in power when the cell probabilities are
changed.
 \end{abstract}

\section{Introduction}
Consider an i.i.d. sequence of random variables
$\xi_0,\xi_1,\ldots,\xi_n$ of trials taking values in the set
$1,\ldots,m$ with respective probabilities $p_1,\ldots,p_{m}$. These
values can be the result of binning continuous random variables into
$m$ multinomial cells. Let $k\geq2$ be an integer and consider the
sequence of dependent random vectors
$\{X_i=(\xi_i,\ldots,\xi_{i+k-1})\}$ for $i=0,\ldots,n$, where the
index $i$ is taken modulo $n$, i.e. the sequence is wrapped around.
Based on this overlapping sequence of dependent and identically
distributed random vectors, a naive chi-square statistic can be
formed as
\[
V_k=\sum\left[C(\alpha)-nP(\alpha)\right]^2/nP(\alpha),
\]
where the sum is taken over all patterns $\alpha$ of size $k$ and
$C(\alpha)$ is the frequency of $\alpha$ in the sequence $X_i$ while
$P(\alpha)$ designates the probability of $\alpha$. $V_k$ mimics the
Pearson chi-square statistic, but due to overlapping, it is
obviously not asymptotically distributed as a chi-square random
variable. I. J. Good \cite{Good53} introduces the difference
statistic $\nabla V_k=V_k-V_{k-1}$ and shows that it has an
asymptotic chi-square distribution with $m^k-m^{k-1}$ degrees of
freedom, when each component in the multinomial probability vector
is equally $1/m$. Indeed, Good \cite{Good53} also shows that
$\nabla^2 V_k=\nabla V_k-\nabla V_{k-1}$ has a chi-square
distribution. Marsaglia proposes $\nabla V_k$ as a test of
uniformity for pseudo-random number sequences capable of detecting
potential local correlations.

The idea of drawing information from the sequence $\{X_i\}$ is also
exploited in the concept of approximate entropy, ApEn, introduced by
Pincus \cite{Pincus91}. ApEn has also been used as a test of
randomness as well as to quantify irregularity in medical time
series in order to predict disorder in patients, see
\cite{Pincus96}.

To thoroughly understand the difference statistic $\nabla V_k$ (and
its relationship to approximate entropy), Alhakim~\cite{Alhakim2004}
analyzes the asymptotic covariance matrix $B_k$ of the simple Markov
chain $\{X_i\}$, obtaining the eigenvalues and a complete set of
orthogonal eigenvectors for the equally likely multinomial
probabilities. This spectral decomposition gives much insight about
the information obtained by increasing the word size from $k-1$ to
$k$. As an example, the kernel of $B_k$ is shown to have dimension
$m^{k-1}$, which explains the degrees of freedom of the difference
statistic. Also, each eigenvector is seen to provide a component
test statistic of $\nabla V_k$ which is asymptotically chi-squared
distributed with exactly one degree of freedom, thus obtaining a
partitioning of $\nabla V_k$ into component tests that are
asymptotically independent. More interestingly, the analysis in
Alhakim~\cite{Alhakim2004} revealed exact relationships between
component tests corresponding to word sizes $k$ and $k-1$. Following
the diagram below with $k=4$, a component test corresponding to an
eigenvector $v_4$ of $B_4$ with eigenvalue $\lambda=4$ is
deterministically the same as an component test with an appropriate
eigenvector $v_3$ of $\lambda=3$ of $B_3$, where $v_4$ can be
obtained recursively from $v_3$. Inductively, this means that all
components for $k=4$ and $\lambda=4$ correspond to components of
$k=1$ and $\lambda=1$. Similarly, all components for $k=4$ and
$\lambda=3$ correspond to components of $k=2$ and $\lambda=1$, etc.
Two important implications are that

(a) The most significant components are those of the simple
non-overlapping case of $k=1$ which provide the components of the
regular Pearson statistic. So it is not useful to use a test with
$k=2$ if the original test with $k=1$ is failed.

(b) The only new information that one obtains when the word size is
increased by one is within the eigenvectors corresponding to the
lowest eigenvalue $1$.

\begin{center}
\begin{tabular}{lllll}
    $\underline{k=1}$&&&&1\\
    &&&&$\uparrow$\\
    $\underline{k=2}$&&&1&2\\
    &&&$\uparrow$&$\uparrow$\\
    $\underline{k=3}$&&1&2&3\\
    &&$\uparrow$&$\uparrow$&$\uparrow$\\
    $\underline{k=4}$&1&2&3&4
\end{tabular}
\end{center}

We add to the previous points the fact that, for any $k$, orthogonal
eigenvectors of $\lambda=1$ can also be obtained recursively from
orthogonal eigenvectors of $\lambda=1$ with word size $k=1$. It is
worth mentioning that Alhakim \textit{et. al} \cite{Alhakim2004b}
generalizes the above results to sequences of random variables
$\xi_i$ that follow any distribution (possibly continuous),
including non-equally likely multinomial probability. Complete
spectral analysis is also obtained with similar relations between
eigenvectors. The above discussion thus suggests that a good set of
orthogonal eigenvectors that partition the regular Pearson statistic
is an essential requirement to get a complete and meaningful
partitioning for higher $k$, both in the case of equally likely and
general non-equally likely multinomial vectors. This is the main
motivation and objective of the current paper.

The paper is organized as follows. We begin by introducing the
classical partition problem in Section~\ref{S:classical}, then we
establish the special case of equally likely cell probabilities
using the non-constant columns of Hadamard matrices in
Section~\ref{S:prelims}. Section~\ref{S:nonequiprobable} deals with
the non-equally likely case. Section~\ref{S:Latin Hadamard}
introduces an orthogonal matrix that serves the purpose of Hadamard
matrix for non-equilikely multinomial probability vectors. In
Section~\ref{S:PowSim} we construct some component vectors and
perform power simulation illustrating the validity and utility of
these components. Section~\ref{S:DivisionAlgebra} explains the
limitations of the method and shows how to overcome these
limitations by adding some restrictions on the cell probabilities.
The last section is an appendix that includes some technical proof.

\section{The Classical Partition Problem}\label{S:classical}
The problem of decomposing the Pearson chi-square statistic into
component tests that are asymptotically independent is a rather old
problem. In fact, Pearson himself addressed this problem in his
Editorial in 1917, see Lancaster \cite{Lancaster1969}, as an answer
to the objection that $X^2$ is an omnibus, overall test. The
importance of the decomposition is that each of the components
provides a test of some particular aspect of the null distribution.
However, the interpretation of these component tests depends largely
on the choice of orthogonal scheme.

Several authors have worked on this problem. Lancaster
\cite{Lancaster1969} presents a general technique of getting various
components based on different orthogonal polynomial schemes. Irwin
\cite{Irwin1949} decomposes $X^2$ into independent components using
Helmert matrices.

More recently, Anderson \cite{Anderson1994} presents an \textit{ad
hoc} decomposition based on $8$-dimensional vectors, whose
orthogonality guarantees the asymptotic independence. He proposes
that his vectors provide component tests that can be used and
interpreted individually as tests for location, symmetry, skewness
and kurtosis. Therefore, when the chi-square test fails, the
individual components provide information about what aspects of the
null distribution leads to rejection. Furthermore, unlike the
columns of a Helmert matrix (and all other decomposition matrices),
the entries in Anderson's vectors all have nonzero values, resulting
in component tests that each use count statistics of all multinomial
cells; a property that should improve the power of these components.

Boero \textit{et al} \cite{Boero2004} notices that these vectors are
obtained from the nonconstant rows of a Hadamard matrix. They then
show how to use a Hadamard matrix, when the number of multinomial
classes, $k$, is a power of $2$, to obtain a full set of $k-1$
asymptotically independent components that include those of
Anderson. Indeed, Boero et al \cite{Boero2004} establishes the
result for the equiprobable case and proves that those components
are not asymptotically independent in the non-equiprobable case,
thus disputing a claim by Anderson about the generality of his
component tests. They also show that, when 8 cells are used, three
of the four vectors that relate to location, skewness and
kurtosis--as described by Anderson-- are in fact 4-dimensional
rather than 8-dimensional. That is, the same information can be
provided by partitioning the support into 4 cells rather than 8.
Then more vectors with new information come up as we increase the
number of cells (and therefore the dimension) to the next power of
2.

This paper obtains a decomposition in the general case of
non-equiprobable cells with the constraint that each component uses
all cell counts, as in the Hadamard decomposition. This is desirable
because, in regard to higher tests described in the introduction,
using orthogonal vectors with many zeros would yield scarce vectors
of higher order, which in turn give higher order component tests
that use very few counts and are extremely hard to interpret.

Interestingly, it is shown that, when no two cell probabilities are
assumed to be a priori related (i.e. taken as different
indeterminate variables), this approach is only possible when the
number of cells is $4$, $8$ or the trivial case of $2$. This is due
to reasons that pertain to limitations on the possible dimension of
division algebras over the field of real numbers. For cell counts
that are a higher power of $2$, the situation is remedied by using
the theory of orthogonal designs. Indeed, this theory explains why
the cases 2, 4, and 8 are constructible without further assumptions,
and also provides constructions for higher powers by imposing
relations between cell probabilities. To the best of the author's
knowledge, orthogonal designs have not been used in the chi-square
partitioning problem. We will discuss this construction and also
show, via simulation that the power of various components can be
drastically improved by varying the cell probabilities.

\section{The Equiprobable Case}\label{S:prelims}
 The classical goodness-of-fit chi-square test
statistic of Pearson is performed by dividing the range of a
variable into $k$ bins that are mutually exclusive classes, thus
making a multinomial model with respective cell probability vector,
say, ${\bf{p}}=(p_1,\cdots, p_k)$, calculated based on the
hypothesized null distribution. For simplicity of presentation here,
we will assume that the null distribution is completely specified,
so that the above cell probabilities are known exactly and do not
need to be estimated.

We will denote the transpose of a matrix ${\bf A}$ by ${\bf
A^{\prime}}$ and a diagonal matrix by ${\bf D(c)}$, where ${\bf
c}=(c_1,\cdots,c_k)^{\prime}$ and ${\bf D}_{ii}=c_i$. Likewise,
${\bf D}^{1/2}({\bf c})$ denotes a diagonal matrix with diagonal
entries given by $\sqrt{c_i}$ and ${\bf D}^{-1/2}({\bf c})$ is its
inverse when all $c_i$s are distinct from zero.

For a sample of size $n$, the test works by comparing the observed
cell frequency vector ${\bf{m}}=(m_1,\cdots, m_k)$ to the expected
vector ${\bf\mu}=n{\bf{p}}$ via the test statistic
$X^2=\displaystyle\sum_{i=1}^k\frac{(m_i-np_i)^2}{np_i}$ which has
an asymptotic chi-square distribution with $k-1$ degrees of freedom.
This asymptotic distribution is based on the multivariate central
limit theorem that states that the multinomial vector has an
approximate multi-normal distribution with mean $n{\bf p}$ and
variance-covariance matrix with diagonal entries $np_i(1-p_i)$ and
off-diagonal entries ${\bf\Sigma}_{ij}=-np_ip_j$. Thus the
normalized vector

$${\bf x}=\displaystyle\frac{\bf
m-\mu}{\sqrt{n}}=\sqrt{n}\left(\hat{\bf p}-{\bf p}\right)$$

\noindent has an asymptotic multinormal distribution
$\mathcal{N}\left({\bf0},{\bf\Sigma}\right)$, where
${\bf\Sigma}=(\sigma_{ij})$, $\sigma_{ii}=p_i(1-p_i)$ and
$\sigma_{ij}=-p_ip_j$ for $i\neq j$, and $\hat{\bf p}$ is the vector
of relative frequencies. It is, of course, well-known that
${\bf\Sigma}$ is singular with rank $k-1$.

We note that ${\bf\Sigma}$ can be represented as $({\bf
D(p)-pp^{\prime}})$. In the equiprobable case, \textit{i.e.}, when
$p_i=1/k$ for all $i=1,\cdots,k$, ${\bf\Sigma}$ reduces to the form
$\displaystyle\frac1k\left({\bf I}_k-{\bf11^{\prime}}/k\right)$,
where ${\bf I}_k$ is the $k\times k$ identity matrix and ${\bf 1}$
is a $k\times1$ vector of ones.

Since ${\bf\Sigma}$ is positive semi-definite, it can be
diagonalized. If the eigenvalues are
$\lambda_1\leq\lambda_2\leq\cdots\leq\lambda_k$, we can write
\begin{equation}\label{E:diagonalization}
{\bf\Sigma=\Gamma^{\prime}\Lambda\Gamma},
\end{equation}
where
${\bf\Lambda=D}\left((\lambda_1,\cdots,\lambda_k)^{\prime}\right)$
and ${\bf\Gamma}$ is the matrix whose rows are orthonormal
eigenvectors of ${\bf\Sigma}$ corresponding respectively to the
ordered eigenvalues.

Although ${\bf\Sigma}$ is diagonalizable, the presence of a zero
eigenvalue makes it rank deficient. In this case, one would
typically calculate any generalized inverse ${\bf\Sigma^{-}}$ of
${\bf\Sigma}$ and thus use as test statistic the quadratic form
$(\sqrt{n}\left(\hat{\bf p}-{\bf
p}\right)^{\prime}){\bf\Sigma^{-}}(\sqrt{n}\left(\hat{\bf p}-{\bf
p}\right))$ which is asymptotically distributed as $\chi^2_{k-1}$.

Tanabe and Saga \cite{Tanabe1992} provides a treatise on the
symbolic Cholesky decomposition of ${\bf\Sigma}$ and a formula for
its generalized inverse. Watson \cite{Watson96} gives sharp
inequalities for the eigenvalues in terms of the cell probabilities
as well as exact and efficient formulae for calculating the
eigenvectors, which also lead to a formula for the generalized
inverse, under general conditions on the $p_i$s.

In the equiprobable case, Boero \textit{et al} \cite{Boero2004} use
the fact that $\left({\bf I}_k-{\bf 11}^{\prime}/k\right)$ is an
idempotent matrix (and hence is a weak inverse of itself) to
represent the Pearson statistic as
\begin{equation}\label{E:chi-square}
X^2=(\sqrt{n}\left(\hat{\bf p}-{\bf p}\right)^{\prime}){\left({\bf
I}_k-{\bf11^{\prime}}/k\right)}(\sqrt{n}\left(\hat{\bf p}-{\bf
p}\right))/(1/k).
\end{equation}

In the case $k=2^l$, they use a Hadamard matrix (without the
constant row) to produce $(k-1)$ asymptotically independent
components of $X^2$, giving a theoretical basis to the seemingly
\textit{ad hoc} selection of vectors in Anderson
\cite{Anderson1994}. Despite the elegance, Boero \textit{et al}
\cite{Boero2004} does not justify the use of Hadamard matrices in
the above procedure. In the following paragraphs, we will explain
their inherent relevance in the diagonalization, thus setting the
stage for a generalization of this procedure to the non-equiprobable
case.

In effect, we note that the constant vector ${\bf1}$ is an
eigenvector of ${\bf\Sigma}$ that corresponds to the simple
eigenvalue $0$. Since ${\bf\Sigma}$ is symmetric and idempotent, it
admits only the eigenvalues $0$ and $1$. It follows that any vector
orthogonal to ${\bf1}$ is an eigenvector of $\lambda=1$. Since all
rows of a Hadamard matrix $H$ are orthogonal with constant norm
$\sqrt{k}$, the diagonalization (\ref{E:diagonalization}) is
satisfied with ${\bf\Gamma}=\displaystyle\frac1{\sqrt{k}}{\bf H}$
(where ${\bf H}$ is written in the standard form so that the first
row of ${\bf H}$ is the vector ${\bf1}^{\prime}$), and
${\bf\Lambda}$ coincides with $\displaystyle\frac1k{\bf I}_k$ except
for ${\bf\Lambda}_{11}=0$.

If we now apply the transformation ${\bf
y}=\displaystyle\frac1{\sqrt{k}}{\bf H x}$, it follows from the
general theory that ${\bf y}$ converges to a centered multinormal
distribution with variance-covariance matrix  ${\bf\Lambda}$. This
and Equation~(\ref{E:chi-square}) immediately lead to the
decomposition $X^2={\bf y^{\prime}y}/(1/k)$.

\section{The Non-Equiprobable Case}\label{S:nonequiprobable}
In this section we present a generalization to the above procedure
in the non-equiprobable case, keeping the number of classes a power
of $2$. We remark here that the diagonalization in the equiprobable
case of the previous section rests on the fact that $k{\bf\Sigma}$
is idempotent. This property is lost in the general case. While $0$
is always a simple eigenvalue, ${\bf\Sigma}$ admits in general many
other eigenvalues that depend on the cell probability vector ${\bf
p}$. Indeed, Watson \cite{Watson96} establishes that, if $p_i\leq
p_{i+1}$ for all $i$ then each of the $k-1$ nonzero eigenvalues is
squeezed between two consecutive $p_is$. i.e., $p_1\leq\lambda_1\leq
p_2\leq\cdots\leq p_{k-1}\leq\lambda_{k-1}\leq p_k$. This
immediately establishes that ${\bf\Sigma}$ is idempotent if and only
if all the $p_is$ are equal. The results in this section are
outlined as follows. We first propose an appropriate transformation
to the vector ${\bf x}$ that will produce a vector whose
variance-covariance matrix is idempotent. We later proceed by
providing a matrix of highly structured eigenvectors, symbolically
defined in terms of ${\bf p}$, and we characterize the cases when
this matrix is orthogonal, which will then readily decompose the
Pearson statistic.

\begin{theorem}\label{T:idemp-transformation}
Let ${\bf y}={\bf D}^{-1/2}({\bf p}){\bf x}$. Then ${\bf y}$
converges in distribution to a multinormal distribution with an
idempotent variance-covariance matrix ${\bf\Sigma^{*}}$.
Furthermore, the vector $\sqrt{\bf p}$ defined as
$\left(\sqrt{p_1},\cdots,\sqrt{p_k}\right)^{\prime}$ is an
eigenvector that spans the kernel of ${\bf\Sigma^{*}}$.
\end{theorem}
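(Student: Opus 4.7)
The plan is to reduce everything to a direct calculation, exploiting the normalizing identity $\sum_i p_i = 1$ to collapse the covariance expression into a rank-one projection.

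First I would use the fact that $\mathbf{y} = \mathbf{D}^{-1/2}(\mathbf{p})\,\mathbf{x}$ is a fixed linear transformation of $\mathbf{x}$, and that $\mathbf{x}$ converges in distribution to $\mathcal{N}(\mathbf{0},\boldsymbol{\Sigma})$ as already recorded in the paper. The continuous mapping theorem (or equivalently, the invariance of the multivariate normal family under linear maps) then gives $\mathbf{y} \to \mathcal{N}(\mathbf{0},\boldsymbol{\Sigma}^*)$ with
\[
\boldsymbol{\Sigma}^{*} \;=\; \mathbf{D}^{-1/2}(\mathbf{p})\,\boldsymbol{\Sigma}\,\mathbf{D}^{-1/2}(\mathbf{p}).
\]

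Next I would substitute the representation $\boldsymbol{\Sigma} = \mathbf{D}(\mathbf{p}) - \mathbf{p}\mathbf{p}'$ recorded earlier. The first piece telescopes to $\mathbf{D}^{-1/2}(\mathbf{p})\,\mathbf{D}(\mathbf{p})\,\mathbf{D}^{-1/2}(\mathbf{p}) = \mathbf{I}_k$, and since $\mathbf{D}^{-1/2}(\mathbf{p})\mathbf{p} = \sqrt{\mathbf{p}}$ coordinatewise, the second piece becomes $\sqrt{\mathbf{p}}\sqrt{\mathbf{p}}'$. Hence
\[
\boldsymbol{\Sigma}^{*} \;=\; \mathbf{I}_k - \sqrt{\mathbf{p}}\sqrt{\mathbf{p}}'.
\]
The key identity is now $\sqrt{\mathbf{p}}'\sqrt{\mathbf{p}} = \sum_i p_i = 1$: expanding $(\boldsymbol{\Sigma}^*)^2$ gives $\mathbf{I}_k - 2\sqrt{\mathbf{p}}\sqrt{\mathbf{p}}' + \sqrt{\mathbf{p}}(\sqrt{\mathbf{p}}'\sqrt{\mathbf{p}})\sqrt{\mathbf{p}}' = \boldsymbol{\Sigma}^{*}$, establishing idempotency.

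For the kernel claim, the same identity shows $\boldsymbol{\Sigma}^{*}\sqrt{\mathbf{p}} = \sqrt{\mathbf{p}} - \sqrt{\mathbf{p}}\cdot 1 = \mathbf{0}$, so $\sqrt{\mathbf{p}}$ lies in the null space. To show it \emph{spans} it, I would invoke the standard fact that an idempotent matrix has rank equal to its trace; since $\operatorname{tr}(\boldsymbol{\Sigma}^{*}) = k - \sqrt{\mathbf{p}}'\sqrt{\mathbf{p}} = k - 1$, the nullity is exactly $1$, so $\sqrt{\mathbf{p}}$ generates the kernel. There is no real obstacle here; the only subtlety worth flagging is that the normalization $\sum p_i = 1$ is what makes $\sqrt{\mathbf{p}}\sqrt{\mathbf{p}}'$ an orthogonal projection and hence makes $\boldsymbol{\Sigma}^{*}$ idempotent, a property that the original $\boldsymbol{\Sigma}$ enjoys only in the equiprobable case.
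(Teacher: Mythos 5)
Your proof is correct, and it takes a somewhat cleaner route than the paper's. The paper verifies idempotency by brute-force multiplication of ${\bf D}^{-1/2}({\bf p})({\bf D(p)-pp^{\prime}}){\bf D}^{-1/2}({\bf p})$ with itself, collapsing the middle term via the identity ${\bf p^{\prime}D}^{-1}({\bf p}){\bf p}=1$ from its calculation lemma, and then separately checks ${\bf\Sigma^{*}}\sqrt{\bf p}={\bf 0}$ using further entries of that lemma; it never explicitly argues that the kernel is one-dimensional, leaving the spanning claim to rest implicitly on the earlier remark that ${\bf\Sigma}$ has rank $k-1$ (which transfers to ${\bf\Sigma^{*}}$ since ${\bf D}^{-1/2}({\bf p})$ is invertible). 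You instead first reduce ${\bf\Sigma^{*}}$ to the closed form ${\bf I}_k-\sqrt{\bf p}\sqrt{\bf p}^{\prime}$, after which idempotency and ${\bf\Sigma^{*}}\sqrt{\bf p}={\bf 0}$ are one-line consequences of $\sqrt{\bf p}^{\prime}\sqrt{\bf p}=\sum_i p_i=1$, and your rank-equals-trace argument ($\operatorname{tr}{\bf\Sigma^{*}}=k-1$, hence nullity $1$) actually supplies the spanning statement that the paper's proof glosses over. What the explicit form buys you is transparency — ${\bf\Sigma^{*}}$ is visibly the orthogonal projection onto $\sqrt{\bf p}^{\perp}$, which also foreshadows why any orthonormal completion of $\sqrt{\bf p}$ consists of eigenvectors of eigenvalue $1$; what the paper's version buys is that it reuses the same small lemma machinery it needs again later in Section~\ref{S:nonequiprobable}.
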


It is worth noting that, componentwise, this transformation is a
re-scaling of the already normalized multinomial frequencies.
Namely, the coordinates of ${\bf y}$ are
$y_i=\displaystyle\frac{x_i}{\sqrt{p_i}}=\displaystyle\frac{m_i-np_i}{\sqrt{np_i}}$.

In the following Lemma, we will gather some simple facts that will
be used in the proof of Theorem~\ref{T:idemp-transformation} and
that can be checked by direct calculations.
\begin{lemma}\label{L:calculations}
The following are true for any probability vector ${\bf p}$.
    \begin{enumerate}
    \item[(a)] ${\bf D}^{-1}({\bf p}){\bf p}={\bf 1}$.
    \item[(b)] ${\bf p^{\prime}\; D}^{-1}({\bf p}){\bf p}$ is the scalar value
    $1$.
    \item[(c)] ${\bf D}^{-1/2}({\bf p})\sqrt{\bf p}={\bf 1}$
    \item[(d)] ${\bf D}^{-1/2}({\bf p}){\bf p}=\sqrt{\bf p}$
    \item[(e)] ${\bf D}^{1/2}({\bf p}){\bf 1}=\sqrt{\bf p}$
    \end{enumerate}
\end{lemma}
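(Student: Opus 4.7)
The plan is simply to verify each identity coordinate-by-coordinate, using the elementary fact that for a diagonal matrix $\mathbf{D}(\mathbf{p})$ with positive entries $p_i$, any real power $\mathbf{D}(\mathbf{p})^{\alpha}$ is again diagonal with entries $p_i^{\alpha}$, and that multiplying a diagonal matrix by a column vector performs componentwise scaling.

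I would take the five items in order, since (b) reuses (a) and (c)--(e) are parallel computations. First I would observe that $\mathbf{D}^{-1}(\mathbf{p})\mathbf{p}$ has $i$-th coordinate $p_i^{-1}p_i = 1$, giving (a). Then (b) follows by writing $\mathbf{p}'\mathbf{D}^{-1}(\mathbf{p})\mathbf{p} = \mathbf{p}'\mathbf{1} = \sum_{i=1}^k p_i = 1$, using that $\mathbf{p}$ is a probability vector. For (c), (d), (e), I would note that $\mathbf{D}^{\pm 1/2}(\mathbf{p})$ has diagonal entries $p_i^{\pm 1/2}$, so the $i$-th coordinate of $\mathbf{D}^{-1/2}(\mathbf{p})\sqrt{\mathbf{p}}$ is $p_i^{-1/2}\,p_i^{1/2} = 1$, giving (c); the $i$-th coordinate of $\mathbf{D}^{-1/2}(\mathbf{p})\mathbf{p}$ is $p_i^{-1/2}\,p_i = p_i^{1/2} = \sqrt{p_i}$, giving (d); and the $i$-th coordinate of $\mathbf{D}^{1/2}(\mathbf{p})\mathbf{1}$ is $p_i^{1/2}\cdot 1 = \sqrt{p_i}$, giving (e).

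There is no genuine obstacle here: the lemma is a bookkeeping statement collecting five one-line identities so that the proof of Theorem~\ref{T:idemp-transformation} can cite them symbolically rather than repeat the componentwise manipulations each time. The only tacit hypothesis to flag is that $p_i>0$ for every $i$, which is implicit in the statement since $\mathbf{D}^{-1/2}(\mathbf{p})$ is only defined in that case, as noted in the paragraph introducing the diagonal-matrix notation.
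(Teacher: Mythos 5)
Your verification is correct and is exactly what the paper intends: the lemma is stated there with the remark that its parts "can be checked by direct calculations," and your coordinatewise computations (together with the implicit requirement $p_i>0$ already noted in the paper's notation paragraph) supply precisely those calculations. No gap and no difference in approach.
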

\begin{proof}[Proof of Theorem \ref{T:idemp-transformation}]
We first prove that ${\bf\Sigma^{*}}$ is idempotent by direct
multiplication. Observing that ${\bf\Sigma^{*}}={\bf D}^{-1/2}({\bf
p}){\bf\Sigma}{\bf D}^{-1/2}({\bf p})$, we get
\begin{eqnarray*}
{\bf\Sigma^{*}}^2&=&{\bf D}^{-1/2}({\bf p})({\bf
D(p)-pp^{\prime}}){\bf D}^{-1/2}({\bf p})\times {\bf D}^{-1/2}({\bf
p})({\bf D(p)-pp^{\prime}}){\bf D}^{-1/2}({\bf p})\\
&=&{\bf D}^{-1/2}({\bf p}){\bf C}{\bf D}^{-1/2}({\bf p})
\end{eqnarray*}
where
\[
{\bf C}={\bf D}{({\bf p})}-2{\bf pp^{\prime}}+{\bf
pp^{\prime}D}^{-1}({\bf p}){\bf pp^{\prime}}.
\]
The last expression in the last row is ${\bf pp^{\prime}}$ by
Lemma~\ref{L:calculations}(b). This shows that
${\bf\Sigma^{*}}^2={\bf\Sigma^{*}}$.

Next, consider ${\bf\Sigma^{*}}\sqrt{\bf p}={\bf D}^{-1/2}({\bf
p})({\bf D(p)-pp^{\prime}}){\bf 1}={\bf D}^{1/2}({\bf p}){\bf
1}-{\bf D}^{-1/2}({\bf p}){\bf p}({\bf p^{\prime} 1})$\newline

$={\bf D}^{1/2}({\bf p}){\bf 1}-{\bf D}^{-1/2}({\bf p}){\bf
p}=\sqrt{\bf p}-\sqrt{\bf p}={\bf 0}$. Where we have used parts (c),
(d) and (e) of Lemma~\ref{L:calculations}.
\end{proof}

The fact that $\sqrt{\bf p}$ is the only eigenvector of $0$ plays a
role similar to the vector ${\bf 1}$ in the equiprobable case as
outlined above. The next section shows how to construct an
orthogonal matrix whose first column is $\sqrt{\bf p}$ and,
therefore, whose other columns are eigenvectors of ${\bf\Sigma^{*}}$
all with eigenvalue $1$. For now, suppose that ${\bf O}$ is an
orthogonal matrix with $\sqrt{\bf p}$ as its first column. On one
hand, ${\bf\Sigma^{*}}={\bf O^{\prime}\Lambda O}$, where
${\bf\Lambda}$ coincides with ${\bf I}_k$ except for
${\bf\Lambda}_{11}=1$. Letting ${\bf A}$ be the submatrix of ${\bf
O}$ obtained by omitting the first column, it is immediate that
${\bf A^{\prime}A}={\bf I}_{k-1}$ and ${\bf A
A^{\prime}}={\bf\Sigma}^*$. Thus, ${\bf y^{\prime}\Sigma^*y}={\bf
y^{\prime}AA^{\prime}y}={\bf(A^{\prime}y)^{\prime}(A^{\prime}y)}$.
On the other hand, with ${\bf D(p)}={\bf D}$ and ${\bf D}^{1/2}{\bf
(p)}={\bf D}^{1/2}$,
\begin{eqnarray*}
{\bf y^{\prime}\Sigma^*y}
&=&{\left({\bf D}^{-1/2}(\frac{\bf m-\mu}{\sqrt{n}})\right)^{\prime}{\bf D}^{-1/2}\left({\bf D-pp}^{\prime}\right){\bf D}^{-1/2}\left({\bf D}^{-1/2}(\frac{\bf m-\mu}{\sqrt{n}})\right)}\\
&=&\frac1n({\bf m-\mu})^{\prime}{\bf D}^{-1/2}\left({\bf
D}^{-1/2}{\bf\Sigma}{\bf D}^{-1/2}\right){\bf D}^{-1/2}({\bf
m-\mu})\\
&=&\frac1n({\bf m-\mu})^{\prime}\left({\bf D}^{-1}-{\bf D}^{-1}{\bf
pp}^{\prime}{\bf D}^{-1}\right)({\bf m-\mu})\\
&=&\frac1n({\bf m-\mu})^{\prime}{\bf D}^{-1}({\bf
m-\mu})-\frac1n({\bf m-\mu})^{\prime}{\bf D}^{-1}{\bf
pp}^{\prime}{\bf D}^{-1}({\bf m-\mu}).\\
\end{eqnarray*}
The first expression in the last row is clearly the Pearson
statistic $X^2$. By Part~(a) of Lemma~\ref{L:calculations}, $({\bf
m-\mu})^{\prime}{\bf D}^{-1}{\bf p}$ is the zero vector, so that the
second term collapses to zero. Hence we have obtained the partition
$X^2={\bf(A^{\prime}y)^{\prime}(A^{\prime}y)}$. If we denote the
columns of $A$ by ${\bf v}_2\cdots,{\bf v}_k$, this partition can be
expressed as
\begin{equation}\label{E:partition}
X^2=\sum_{l=2}^k\left({\bf v}_l^{\prime}{\bf y}\right)^2,
\end{equation}
\noindent where ${\bf v}_l^{\prime}{\bf
y}=\displaystyle\sum_{i=1}v_{li}\left(\displaystyle\frac{m_i-np_i}{\sqrt{np_i}}\right)$
and $v_{li}$ is the $i^{th}$ coordinate of ${\bf v}_l$.

\section{A Latin-Hadamard Matrix}\label{S:Latin Hadamard}
As planned above, in this section we construct a matrix of
orthogonal eigenvectors whose first column is $\sqrt{\bf p}$, the
generator of the kernel of ${\bf\Sigma^{*}}$. To this end, we will
first construct a square matrix with integer entries, whose special
features will allow the construction we propose.

For $k=2^w$, define a matrix ${\bf S}=[{\bf c}_1,\cdots, {\bf c}_k]$
with columns ${\bf c}_i$ such that ${\bf
c}_1=[1,\cdots,k]^{\prime}$, and for $i=1,\cdots,2^{w-1}$ let
\[
c_{2i,2}=c_{2i-1,1};\;\textup{ and } c_{2i-1,2}=c_{2i,1};
\]
Now for $l=0,\cdots,w-1$, write ${\bf S}$ in blocks of size
$2^l\times2^l$ matrices as
\[
\left[
\begin{tabular}{lll}
${\bf S}^l_{1,1}$       & $\cdots$  & ${\bf S}^l_{1,2^{w-l}}$\\
$\vdots$                & $\ddots$  & $\vdots$\\
${\bf S}^l_{2^{w-l},1}$ & $\cdots$  & ${\bf S}^l_{2^{w-l},2^{w-l}}$
\end{tabular}
\right]
\]
The first block column ${\bf S}^1_{\cdot,1}$ of ${\bf S}$ has been
defined. We fill in the rest of the matrix as follows. For each
value of $l$, $l=2,\cdots,w-1$, we define the second column ${\bf
S}^l_{\cdot,2}$ of block matrices as
\[
{\bf S}^l_{2i,2}={\bf S}^l_{2i-1,1};\;\textup{ and } {\bf
S}^l_{2i-1,2}={\bf S}^l_{2i,1};
\]
\noindent for each $i=1,\cdots,2^{w-l}$. We note here that the
second column can be defined in several ways, producing the same
matrix up to a permutation of the rows. For example, the second
column ${\bf c}_2$ can be defined as $[k,\cdots,1]^{\prime}$. The
current construction is most convenient for recursive arguments due
to its self-similarities, as apparent in Figure~\ref{F:Latin16},
where the upper left $2^w\times2^w$ sub-matrix is one such matrix
for $w=1,\cdots,4$.

\begin{figure}[h!]
\center $\left(
\begin{tabular}{cccccccccccccccc}
1 & 2 & 3 & 4 & 5 & 6 & 7 & 8 & 9 & 10 & 11 & 12 & 13 & 14  & 15 &  16\\
2 & 1 & 4 & 3 & 6 & 5 & 8 & 7 & 10 & 9 & 12 & 11 & 14 & 13  & 16 &  15\\
3 & 4 & 1 & 2 & 7 & 8 & 5 & 6 & 11 & 12 & 9 & 10 & 15 & 16  & 13 &  14\\
4 & 3 & 2 & 1 & 8 & 7 & 6 & 5 & 12 & 11 & 10 & 9 & 16 & 15  & 14 &  13\\
5 & 6 & 7 & 8 & 1 & 2 & 3 & 4 & 13 & 14 & 15 & 16 & 9 & 10  & 11 &  12\\
6 & 5 & 8 & 7 & 2 & 1 & 4 & 3 & 14 & 13 & 16 & 15 & 10 & 9  & 12 &  11\\
7 & 8 & 5 & 6 & 3 & 4 & 1 & 2 & 15 & 16 & 13 & 14 & 11 & 12 & 9  &  10\\
8 & 7 & 6 & 5 & 4 & 3 & 2 & 1 & 16 & 15 & 14 & 13 & 12 & 11 & 10 &  9\\
9 & 10 & 11 & 12 & 13 & 14  & 15 &  16 & 1 & 2 & 3 & 4 & 5 & 6 & 7 & 8\\
10 & 9 & 12 & 11 & 14 & 13  & 16 &  15 & 2 & 1 & 4 & 3 & 6 & 5 & 8 & 7\\
11 & 12 & 9 & 10 & 15 & 16  & 13 &  14 & 3 & 4 & 1 & 2 & 7 & 8 & 5 & 6\\
12 & 11 & 10 & 9 & 16 & 15  & 14 &  13 & 4 & 3 & 2 & 1 & 8 & 7 & 6 & 5\\
13 & 14 & 15 & 16 & 9 & 10  & 11 &  12 & 5 & 6 & 7 & 8 & 1 & 2 & 3 & 4\\
14 & 13 & 16 & 15 & 10 & 9  & 12 &  11 & 6 & 5 & 8 & 7 & 2 & 1 & 4 & 3\\
15 & 16 & 13 & 14 & 11 & 12 & 9  &  10 & 7 & 8 & 5 & 6 & 3 & 4 & 1 & 2\\
16 & 15 & 14 & 13 & 12 & 11 & 10 &  9 & 8 & 7 & 6 & 5 & 4 & 3 & 2  &
1\\
\end{tabular}
\right)$ \caption{A $2^4\times2^4$ Latin square with the AB-BA
property.}\label{F:Latin16}
\end{figure}

Several immediate properties that can be checked by induction make
this matrix special. We first list a few such properties in the form
of a lemma, then we state and prove a proposition that deals with
the special feature that will allow us to turn ${\bf S}$ into an
orthogonal matrix by multiplying some entries by $-1$.

\begin{lemma}\label{L:Latin}
The following assertions are true about ${\bf S}$.

(a) ${\bf S}$ is a Latin square.

(b)  ${\bf S}$ is symmetric with respect to both the first and
second diagonals.

(c) ${\bf S}$ can be defined recursively in $w$ as follows. For
$w=0$, let ${\bf S}_0=1$. For an integer $w\geq1$ let ${\bf
S}_w=\left[
\begin{tabular}{ll}
${\bf A}$ & ${\bf B}$\\
${\bf B}$ & ${\bf A}$
\end{tabular}
\right]$, where ${\bf A=S}_{w-1}$ and  ${\bf B}={\bf A}+2^{w-1}{\bf
1}_{2^{w-1}}{\bf 1}^{\prime}_{2^{w-1}}$, and ${\bf 1}_{2^{w-1}}$ is
the $2^{w-1}$-dimensional constant vector of ones.
\end{lemma}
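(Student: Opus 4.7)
The plan is to establish part (c) first by induction on $w$, and then to derive parts (a) and (b) as straightforward consequences of the recursive block form. This ordering is natural because the AB--BA description packages $\mathbf{S}_w$ in the most usable way for proving both the Latin-square property and the two symmetries.

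For (c), I would proceed by induction on $w$, with the base case $w = 0$ trivial (and $w = 1$ immediately verifiable from the definitions of $\mathbf{c}_1$ and $\mathbf{c}_2$). For the inductive step, the key observation is that in the iterative construction every swap at a level $l < w-1$ involves pairs of $2^l \times 2^l$ blocks that fit entirely within the top $2^{w-1}$ rows or entirely within the bottom $2^{w-1}$ rows. Hence the first $w-2$ levels act independently on the two halves: on the top half they replay the construction for $w-1$ starting from $\mathbf{c}_1 = [1, \ldots, 2^{w-1}]'$ (producing $\mathbf{S}_{w-1}$ by the induction hypothesis), and on the bottom half they replay it starting from $[2^{w-1}+1, \ldots, 2^w]'$, which is a shift by $2^{w-1}$ and hence produces $\mathbf{S}_{w-1} + 2^{w-1}\mathbf{1}\mathbf{1}'$. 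The final level $l = w-1$ swaps the two $2^{w-1} \times 2^{w-1}$ halves of the now-complete left block-column, filling the right half of the matrix to yield exactly the AB--BA pattern.

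Given (c), part (a) is an immediate induction: if $\mathbf{A} = \mathbf{S}_{w-1}$ is a Latin square on $\{1, \ldots, 2^{w-1}\}$, then $\mathbf{B} = \mathbf{A} + 2^{w-1}\mathbf{1}\mathbf{1}'$ is a Latin square on $\{2^{w-1}+1, \ldots, 2^w\}$, and every row (respectively column) of $\mathbf{S}_w$, being a concatenation of a row of $\mathbf{A}$ with a row of $\mathbf{B}$, contains each element of $\{1, \ldots, 2^w\}$ exactly once. For (b), main-diagonal symmetry holds because $\mathbf{A}$ is symmetric by induction, $\mathbf{B}$ is a sum of the two symmetric matrices $\mathbf{A}$ and $2^{w-1}\mathbf{1}\mathbf{1}'$, and the block layout has $\mathbf{A}$ on both diagonal block positions and $\mathbf{B}$ on both off-diagonal block positions. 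For the anti-diagonal, the $2^w \times 2^w$ anti-diagonal reflection swaps the $(1,1)$ and $(2,2)$ block positions (both equal to $\mathbf{A}$) and fixes the $(1,2)$ and $(2,1)$ block positions (both equal to $\mathbf{B}$), acting within each block as the corresponding $2^{w-1} \times 2^{w-1}$ anti-diagonal reflection; anti-diagonal symmetry of $\mathbf{S}_w$ therefore reduces to that of $\mathbf{A}$ and $\mathbf{B}$, which follows from the induction hypothesis together with the obvious anti-diagonal symmetry of $\mathbf{1}\mathbf{1}'$.

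The only mildly delicate step is in (c): carefully matching the level-by-level swap construction, which operates in place on a fixed $2^w \times 2^w$ grid, with the clean recursive doubling. Once (c) is in hand, parts (a) and (b) are essentially mechanical block-matrix inductions.
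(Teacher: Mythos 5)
Your argument is correct: establishing the block recursion (c) first by tracking that all swaps below the top level act separately on the top and bottom halves (and commute with the constant shift by $2^{w-1}$), and then deducing (a) and (b) by block-matrix induction, works as you describe. The paper itself offers no written proof---it merely notes these properties "can be checked by induction"---and your proposal is exactly the inductive verification being alluded to, so it matches the paper's (implicit) approach.
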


\begin{proposition}\label{P:rectangleProperty}
For dimension $k=2^w$ and indices $1\leq i_1,j_1,j_2\leq k$,
$j_1\neq j_2$, if ${\bf S}_{i_1,j_1}=a$ and ${\bf S}_{i_1,j_2}=b$,
then there exist an index $i_2\neq i_1$, $1\leq i_2\leq k$ such that
${\bf S}_{i_2,j_1}=b$ and ${\bf S}_{i_2,j_2}=a$.
\end{proposition}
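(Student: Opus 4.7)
The plan is to induct on $w$, exploiting the block recursion $S_w=\begin{pmatrix} A & B \\ B & A \end{pmatrix}$ from Lemma~\ref{L:Latin}(c), where $A=S_{w-1}$ and $B=A+2^{w-1}\mathbf{1}_{2^{w-1}}\mathbf{1}_{2^{w-1}}^{\prime}$. The base case $w=1$ is immediate from $\begin{pmatrix}1&2\\2&1\end{pmatrix}$. For the inductive step, I would partition the argument into cases according to which of the four $2^{w-1}\times2^{w-1}$ blocks contain the cells $(i_1,j_1)$ and $(i_1,j_2)$; a key observation is that $A$'s entries all lie in $\{1,\ldots,2^{w-1}\}$ while $B$'s lie in $\{2^{w-1}+1,\ldots,2^w\}$, so the block containing a cell is determined by whether the entry exceeds $2^{w-1}$.

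When $j_1$ and $j_2$ lie in the same block-column (both $\leq 2^{w-1}$ or both $>2^{w-1}$), the pair $(a,b)$ comes entirely from $A$ or entirely from $B$; because $B$ is merely an additive shift of $A$, the inductive hypothesis applied to $A$ produces the required swapping row directly, and it can be kept in the same block-row as $i_1$. The substantive case is the mixed one, $j_1\leq 2^{w-1}<j_2$. Here exactly one of $a,b$ exceeds $2^{w-1}$, which forces the desired $i_2$ into the block-row opposite to $i_1$. Writing $i_2=i_2'\pm 2^{w-1}$ and folding $j_2\mapsto j_2-2^{w-1}$, the two required equations become a request to find a row of $A$ that swaps two specified values in two specified columns, which is exactly the inductive hypothesis.

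The main obstacle is the degenerate sub-case $j_1=j_2-2^{w-1}$, where the two columns collapse after folding and the inductive hypothesis cannot be invoked. I would handle it directly: from $a=A_{i_1,j_1}$ and $b=B_{i_1,j_1}=A_{i_1,j_1}+2^{w-1}$ we see $b=a+2^{w-1}$, and the choice $i_2=i_1+2^{w-1}$ (resp.\ $i_1-2^{w-1}$) gives $S_{i_2,j_1}=a+2^{w-1}=b$ and $S_{i_2,j_2}=A_{i_1,j_1}=a$ immediately from the shift formula $B=A+2^{w-1}\mathbf{1}\mathbf{1}^{\prime}$. The requirement $i_2\neq i_1$ is automatic whenever the chosen row lies in the opposite block-row, and in the non-degenerate case it follows from $a\neq b$, which in turn follows from $j_1\neq j_2$ and the Latin-square property of Lemma~\ref{L:Latin}(a).

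As a remark, the entire statement admits a one-line proof via the closed form $S_{i,j}=1+(i-1)\oplus(j-1)$ (with $\oplus$ the bitwise XOR), verifiable by a straightforward induction on $w$ using the same block recursion: then $i_2:=1+(i_1-1)\oplus(a-1)\oplus(b-1)$ works, and $i_2\neq i_1$ because $a\neq b$. I would only include this as a remark, since the block-inductive argument is more in keeping with the structural viewpoint developed in Section~\ref{S:Latin Hadamard}.
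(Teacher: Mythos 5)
Your proposal is correct and is essentially the paper's own argument: an induction on the block recursion of Lemma~\ref{L:Latin}(c), reducing (after handling the same-block-column cases by the inductive hypothesis applied to ${\bf A}$) to the mixed case where $a$ lies in ${\bf A}$ and $b$ in ${\bf B}$, folding $j_2\mapsto j_2-2^{w-1}$ to invoke the hypothesis, with the aligned case $j_2=j_1+2^{w-1}$ (where the folded columns coincide) disposed of directly via ${\bf B}={\bf A}+2^{w-1}{\bf 1}{\bf 1}^{\prime}$, just as the paper's proof allows $t_1=t_2$. Your closing remark that ${\bf S}_{i,j}=1+(i-1)\oplus(j-1)$, so that $i_2=1+(i_1-1)\oplus(a-1)\oplus(b-1)$ works and $i_2\neq i_1$ since $a\neq b$, is also correct and gives a shorter proof than the one in the paper, though the paper does not take that route.
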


\begin{proof} Let ${\bf S}=\left[
\begin{tabular}{ll}
${\bf A}$ & ${\bf B}$\\
${\bf B}$ & ${\bf A}$
\end{tabular}
\right]$ as in Lemma~(\ref{L:Latin}c). We will use an inductive
argument. So, it suffices to assume that $a$ and $b$ are entries of
${\bf A}$ and ${\bf B}$ respectively. That is, for arbitrary indices
$s_1$, $t_1$, and $t_2$, $1\leq s_1,t_1,t_2\leq2^{w-1}$, let ${\bf
A}_{s_1,t_1}=a$ and ${\bf B}_{s_1,t_2}=b$, where $t_1$ and $t_2$
need not be different. Since ${\bf S}$ is a Latin square, there
exists $s_2$ ($1\leq s_2\leq2^{w-1}$) such that ${\bf
S}_{s_2+2^{w-1},t_1}=b$. Referring to the block matrix composition
of ${\bf S}$, this means that ${\bf B}_{s_2,t_1}=b$. Since the
matrices ${\bf A}$ and ${\bf B}$ have the same structure and
relative positions, by Lemma~(\ref{L:Latin}c), it follows that ${\bf
A}_{s_1,t_2}={\bf A}_{s_2,t_1}=b^{\prime}$ for some $b^{\prime}$,
$1\leq b^{\prime}\leq2^{w-1}$. Applying the inductive hypothesis to
the matrix ${\bf A}$, it is evident that ${\bf A}_{s_2,t_2}={\bf
A}_{s_1,t_1}=a$. To formally complete the proof we take $i_1=s_1$,
$j_1=t_1$ and $i_2=s_2+2^{w-1}$, $j_2=t_2+2^{w-1}$.

\end{proof}

We will refer to the property proven in the above proposition as the
AB-BA property, and to every four entries that make up this property
as AB-BA corners. This property potentially allows us, by
appropriately inserting minus signs to certain entries, to come up
with a matrix that has orthogonal columns. Namely, if among every
four AB-BA corners exactly three are `colored' by the same color
(where the two available colors are `+' and `-') then all columns
will be orthogonal. It turns out that this is possible only in few
cases as stated in the following theorem.

\begin{theorem}\label{T:LatinHadamard}
When $k=2$, $4$ or $8$, there exists a matrix ${\bf H}$ with
orthogonal columns and orthogonal rows and that has the following
properties

(a) $|{\bf H}_{ij}|={\bf S}_{ij}$,

(b) The matrix ${\bf \tilde{H}}$ defined componentwise by ${\bf
\tilde{H}}_{ij}=sgn({\bf H}_{ij})$ is a Hadamard matrix.

(c) The signs of entries of ${\bf H}$ can be arranged so that all
entries on the first row and first column are positive.

Furthermore, an orthogonal matrix with these properties is not
possible for any other value of $k$.
\end{theorem}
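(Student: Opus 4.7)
\medskip

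The plan is to translate the desired sign assignment on ${\bf S}$ into a bilinear composition formula on $\mathbb{R}^{k}$ and then invoke Hurwitz's classical ``$1,2,4,8$'' theorem on sums of squares. First I would reduce the orthogonality requirement to a combinatorial sign condition on ${\bf \tilde{H}}$. Since ${\bf H}$ is eventually applied with the integer entries of ${\bf S}$ replaced by the algebraically independent quantities $\sqrt{p_{i}}$ in Section~\ref{S:Latin Hadamard}, the cross product of columns $j_{1}\neq j_{2}$ must vanish swap-pair by swap-pair along the decomposition of rows into AB-BA pairs guaranteed by Proposition~\ref{P:rectangleProperty}. This forces the product of the four signs of ${\bf \tilde{H}}$ around every AB-BA rectangle to equal $-1$, and that condition is plainly sufficient. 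Properties (b) and (c) then come essentially for free: each swap pair also contributes $0$ to the inner product of the pure $\pm 1$ matrix ${\bf \tilde{H}}$, so ${\bf \tilde{H}}$ is Hadamard; and (c) follows by flipping appropriate rows and then non-initial columns, which normalises the first row and column without disturbing the AB-BA sign product.

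\medskip

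The central equivalence to be proved next is the following. Identifying $\{1,\ldots,2^{w}\}$ with $(\mathbb{Z}/2)^{w}$ so that, by Lemma~\ref{L:Latin}(c), ${\bf S}_{ij}$ is the XOR table of the group, a sign function $\epsilon(i,j)$ satisfies the AB-BA sign-product condition if and only if the bilinear product $e_{i}\cdot e_{j}:=\epsilon(i,j)\,e_{i\oplus j}$ on $\mathbb{R}^{k}$ satisfies $|x\cdot y|^{2}=|x|^{2}|y|^{2}$ identically. A direct expansion of $|x\cdot y|^{2}$ in the standard basis shows that its only off-diagonal cross-terms are indexed by four-tuples $(i,j,i\oplus d,j\oplus d)$ with $d\neq 0$, one for each AB-BA corner of ${\bf S}$, and the vanishing of each such term is exactly the rectangle sign condition.

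\medskip

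Both halves of the theorem then follow. For $k\in\{2,4,8\}$ the multiplication tables of $\mathbb{C}$, the quaternions $\mathbb{H}$ and the octonions $\mathbb{O}$ supply concrete $(\mathbb{Z}/2)^{w}$-graded composition algebras, so the choice ${\bf \tilde{H}}_{ij}=\operatorname{sgn}(e_{i}e_{j})$ works; row orthogonality of ${\bf H}$ is then automatic, as the Latin property forces every column to have equal squared norm $\sum_{r=1}^{k}r^{2}$, making ${\bf H}^{\prime}{\bf H}$ a scalar multiple of the identity. For $k=2^{w}$ with $w\geq 4$, a valid $\epsilon$ would yield such a composition on $\mathbb{R}^{k}$, contradicting Hurwitz's theorem; and for $k$ not a power of $2$ the matrix ${\bf S}$ is not even defined, so property (a) fails vacuously. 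The main obstacle I anticipate lies in the equivalence of the middle paragraph: matching up the AB-BA rectangles of ${\bf S}$ with the cross terms of $|x\cdot y|^{2}$ requires careful bookkeeping, although the subsequent invocation of Hurwitz is immediate, since our $(\mathbb{Z}/2)^{w}$-graded compositions are a special case of the arbitrary bilinear compositions to which his theorem applies.
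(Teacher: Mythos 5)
Your proposal is correct, but it follows a genuinely different route from the paper's official proof. The paper proves Theorem~\ref{T:LatinHadamard} constructively and computationally: the appendix gives a coloring scheme that forces orthogonality against the first row and column, counts the admissible colorings, and then verifies by exhaustive search that all of them work for $w\le 3$ while none of the $2048$ colorings works for $k=16$ (non-existence for larger powers then follows because a valid coloring would restrict to one of the upper-left $16\times16$ block). The algebraic explanation the paper does offer (Section~\ref{S:DivisionAlgebra}) goes through zero divisors and de Marrais's classification of sedenion zero divisors of the form $e_i\pm e_j$, and is only sketched. Your argument replaces both: you identify ${\bf S}$ with the XOR table of $(\mathbb{Z}/2)^w$ (which indeed follows by induction from Lemma~\ref{L:Latin}(c)), show that the rectangle sign condition $\epsilon(i,j)\epsilon(i\oplus d,j\oplus d)=-\epsilon(i,j\oplus d)\epsilon(i\oplus d,j)$ is exactly the vanishing of the cross terms in $|x\cdot y|^2-|x|^2|y|^2$ for the twisted product $e_i\cdot e_j=\epsilon(i,j)e_{i\oplus j}$ (I checked the bookkeeping you were worried about: each off-diagonal monomial $x_ix_{i'}y_jy_{j'}$ with $i'=i\oplus d$, $j'=j\oplus d$ receives contributions from precisely the two cells $k=i\oplus j$ and $k=i\oplus j'$, giving coefficient $2[\epsilon(i,j)\epsilon(i',j')+\epsilon(i,j')\epsilon(i',j)]$), so existence for $k=2,4,8$ comes from the twisted-group-algebra form of $\mathbb{C}$, $\mathbb{H}$, $\mathbb{O}$ and non-existence for $k=2^w$, $w\ge4$, from Hurwitz's sums-of-squares theorem directly, with no search and no appeal to the sedenion zero-divisor structure. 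What each approach buys: the paper's search yields the complete explicit list of Latin-Hadamard matrices (Table~\ref{T: LatHad4AND8}), which is what is actually used for the component tests, whereas your argument is uniform in $w$, computation-free, and makes the ``why'' transparent. One point you should state explicitly rather than leave implicit: the necessity direction (orthogonality forces the rectangle condition pair by pair) uses symbolic orthogonality, i.e., orthogonality that survives replacing $1,\dots,k$ by arbitrary (or algebraically independent) values, since with the concrete integers one cannot a priori exclude accidental cancellation across different AB-BA pairs; this is the reading the paper itself intends, as its Corollary substitutes $\sqrt{p_s}$ and invokes exactly this pairwise cancellation, so it is a clarification rather than a gap.
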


\begin{proof}
In an appendix at the end of this paper, we  we present a systematic
way to \textit{color} the entries by `+' and `-' in a way that
guarantees that every column is orthogonal to the first column and
every row is orthogonal to the first row. As it turns out, there are
$2^{2^{w-1}-w}$ ways to make this happen. When $w=1,2,3$ each of
these colorings yields mutually orthogonal columns. (We will refer
to such a matrix with orthogonal columns as a Latin-Hadamard
matrix.) When $k=4$, no such coloring makes all the columns mutually
orthogonal. As the coloring is performed by first coloring the upper
left quarter of the matrix, this already shows that no coloring is
possible for $k\geq4$. Now to see that there does not exist any
colored matrix with $k=4$ suppose one exists. Then necessarily the
transpose is a colored matrix. In particular, every column is
orthogonal to the first column and every row is orthogonal to the
first row. Our exhaustive search shows that no such matrix with the
latter property has all of its columns mutually orthogonal.
\end{proof}

The above proof is a brute force argument, which does not reveal why
the so-called Latin-Hadamard matrices do not exist in higher
dimensions. In fact, the columns are all orthogonal if the equation
\begin{equation}\label{E:OrthoConditions}
\alpha_{i,j}\cdot\alpha_{i,j+l^{\prime}}=-\alpha_{i+l,j}\cdot\alpha_{i+l,j+l^{\prime}}
\end{equation}
is satisfied for all $i$ and $j$ between 1 and $2^k$, $l\neq0$ and
$l^{\prime}\neq0$ such that
$\textbf{S}_{i,j}=\textbf{S}_{i+l,j+l^{\prime}}$ and
$\textbf{S}_{i,j+l^{\prime}}=\textbf{S}_{i+l,j}$, and where the
values of the $\alpha$s are defined via
$\textbf{H}_{ij}=\alpha_{ij}\textbf{S}_{ij}$.

Indeed, the coloring scheme--described in the appendix--is performed
with the only objective of making Equation~\ref{E:OrthoConditions}
work for $i=1$ and $j\neq1$ or $i\neq1$ and $j=1$. That is, it only
ensures that each column is orthogonal to the first column and each
row is orthogonal to the first row. There is no guarantee that the
numerous other equations of the form (\ref{E:OrthoConditions}) for
$i\neq1$ and $j\neq1$ are satisfied. For $k=1,2,3$ a solution comes
as a free gift. That is, we found a solution of an over-determined
system by only considering some of the equations. In
Section~\ref{S:DivisionAlgebra} below we give another proof of
Theorem~\ref{T:LatinHadamard}, that better explains why the coloring
scheme works only for $k=1,2,3$. In the rest of this section we will
describe how each colored matrix corresponding to $w\leq3$ can be
turned into an orthogonal matrix $\bf{O}$ that partitions the
chi-square statistic in a way that each component test is a sum of
differences of two cell frequencies in some order, where each cell
frequency is used exactly once.

Figure~\ref{F:LatHad8by8} gives a matrix ${\bf H}$ of size $8$,
which also gives examples of lower sizes given by the upper left
$2\times2$ and $4\times4$ sub-matrices respectively.

Interestingly, when $k=16$, many pairs of vectors produced by our
current construction are already  orthogonal. Indeed the first eight
columns (as well as the other eight) are all mutually orthogonal.
Since different vectors are typically used as individual components,
one is still able to use many of these vectors as independent
component tests that can individually provide useful information.

\begin{figure}[h!]
\center $\left(
\begin{tabular}{cccccccc}
+1 & +2 & +3 & +4 & +5 & +6 & +7 & +8\\
+2 & -1 & -4 & +3 & +6 & -5 & +8 & -7\\
+3 & +4 & -1 & -2 & +7 & -8 & -5 & +6\\
+4 & -3 & +2 & -1 & +8 & +7 & -6 & -5\\
+5 & -6 & -7 & -8 & -1 & +2 & +3 & +4\\
+6 & +5 & +8 & -7 & -2 & -1 & +4 & -3\\
+7 & -8 & +5 & +6 & -3 & -4 & -1 & +2\\
+8 & +7 & -6 & +5 & -4 & +3 & -2 & -1\\
\end{tabular}
\right)$ \caption{A $8\times8$ signed Latin square with orthogonal
columns and rows.}\label{F:LatHad8by8}
\end{figure}

Evidently, one can substitute the integers $1,\cdots,2^w$, in the
matrix ${\bf H}$, respectively with any set of $2^w$ real numbers
and keep the orthogonality of the columns, thanks to the AB-BA
property of the underlying matrix ${\bf S}$. Returning to the main
problem of this paper, we are now ready to give an orthonormal
matrix of eigenvectors of ${\bf\Sigma}^{*}$ when the number of
categories in the multinomial model is a $2$, $4$ or $8$. We present
this matrix in the following corollary to
Theorem~\ref{T:LatinHadamard}.

\begin{corollary}
Consider the matrix ${\bf O}$ defined componentwise by the equation
$${\bf O}_{ij}=sgn({\bf H}_{ij})\cdot\sqrt{p_s}$$ where $s=|{{\bf
H}_{ij}}|$, and $p_s;\;s=1,\cdots,2^w$ are the components of the
vector $\sqrt{\bf p}$ and ${\bf H}$ is as given in
Theorem~\ref{T:LatinHadamard}(c). Then the columns of ${\bf O}$ form
a basis of orthonormal eigenvectors of ${\bf\Sigma^*}$, such that
the first column is the zero eigenvector $\sqrt{\bf p}$.
\end{corollary}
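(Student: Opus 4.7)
The plan is to verify the four claims of the corollary — that the first column of $\mathbf{O}$ equals $\sqrt{\mathbf{p}}$, that $\mathbf{O}$ is orthonormal, and that the remaining columns are eigenvectors of $\mathbf{\Sigma}^*$ with eigenvalue $1$ — by combining the structural properties of $\mathbf{H}$ proved in Section~\ref{S:Latin Hadamard} with the spectral information furnished by Theorem~\ref{T:idemp-transformation}. Several pieces are essentially immediate. By the construction of $\mathbf{S}$ the first column is $(1,2,\ldots,k)^\prime$, and by Theorem~\ref{T:LatinHadamard}(c) all signs on the first column of $\mathbf{H}$ are $+$, so $\mathbf{O}_{i,1}=\sqrt{p_i}$, which is the zero eigenvector of $\mathbf{\Sigma}^*$ by Theorem~\ref{T:idemp-transformation}. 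Likewise, each column of $\mathbf{O}$ has squared norm $\sum_i p_{|\mathbf{H}_{ij}|}=\sum_s p_s=1$, because $\mathbf{S}$ is a Latin square (Lemma~\ref{L:Latin}(a)) and so as $i$ runs over $1,\ldots,k$ the index $|\mathbf{H}_{ij}|$ takes each value in $\{1,\ldots,k\}$ exactly once.

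The substantive step is orthogonality of two distinct columns $j_1\neq j_2$. I would invoke the AB-BA property (Proposition~\ref{P:rectangleProperty}) to partition the row indices into pairs $(i_1,i_2)$ with $|\mathbf{H}_{i_1,j_1}|=|\mathbf{H}_{i_2,j_2}|=a$ and $|\mathbf{H}_{i_1,j_2}|=|\mathbf{H}_{i_2,j_1}|=b$ for some $a\neq b$. The joint contribution of each such pair to $\langle \mathrm{col}_{j_1},\mathrm{col}_{j_2}\rangle$ factors as
\[
\bigl(\mathrm{sgn}(\mathbf{H}_{i_1,j_1})\mathrm{sgn}(\mathbf{H}_{i_1,j_2})+\mathrm{sgn}(\mathbf{H}_{i_2,j_1})\mathrm{sgn}(\mathbf{H}_{i_2,j_2})\bigr)\sqrt{p_a p_b}.
\]
The key observation is that the coloring scheme in the proof of Theorem~\ref{T:LatinHadamard} was engineered so that, on every AB-BA quadruple, exactly three of the four signs agree and one differs; hence the bracketed sum of sign products vanishes quadruple by quadruple. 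This is precisely the mechanism that makes the integer orthogonality of $\mathbf{H}$ survive the substitution of the symbols $1,\ldots,k$ by arbitrary reals $\sqrt{p_1},\ldots,\sqrt{p_k}$, as noted immediately before the corollary. The main obstacle is not the computation itself but recognising that what is needed is exactly this quadruple-wise cancellation: orthogonality of the integer matrix $\mathbf{H}$ alone would not suffice, since cancellations could otherwise occur only after summing over several AB-BA pairs and hence would be broken under a symbolic substitution.

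To finish, any column $j\geq 2$ of $\mathbf{O}$ is orthogonal to the first column $\sqrt{\mathbf{p}}$ and therefore lies in the orthogonal complement of the one-dimensional kernel of $\mathbf{\Sigma}^*$. Since $\mathbf{\Sigma}^*$ is idempotent by Theorem~\ref{T:idemp-transformation}, it acts as the identity on this complement, so each such column is an eigenvector with eigenvalue $1$. Together with the first column $\sqrt{\mathbf{p}}$ spanning the kernel, this produces the claimed orthonormal basis of eigenvectors of $\mathbf{\Sigma}^*$.
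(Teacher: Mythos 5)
Your argument is correct and follows essentially the route the paper intends: the first column equals $\sqrt{\bf p}$ by Theorem~\ref{T:LatinHadamard}(c), unit norms follow from the Latin-square property, symbolic orthogonality comes from pairing rows via the AB-BA property so that each monomial $\sqrt{p_ap_b}$ cancels within its own quadruple, and idempotence of ${\bf\Sigma^*}$ (Theorem~\ref{T:idemp-transformation}) then makes every column beyond the first an eigenvector with eigenvalue $1$. The one imprecision is attributing the quadruple-wise sign cancellation to the coloring scheme being engineered for it: the scheme only enforces Equation~(\ref{E:OrthoConditions}) for quadruples meeting the first row or first column, and it is the further verification inside Theorem~\ref{T:LatinHadamard} for $k=2,4,8$ (equivalently, the absence of zero divisors in the quaternion and octonion multiplication tables, Section~\ref{S:DivisionAlgebra}) that secures the condition on all remaining quadruples; with that citation corrected, your proof coincides with the paper's.
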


All $4\times4$ and $8\times8$ Latin-Hadamard matrices are displayed
in Table~\ref{T: LatHad4AND8}.

\section{Power Simulation for Components}\label{S:PowSim}
As an illustration, we will use the matrix on the left of the first
row of $8\times8$ matrices in Table~\ref{T: LatHad4AND8}, which
displays all possible Latin-Hadamard matrices of dimensions
$4\times4$ and $8\times8$. Since the columns of this matrix are
orthogonal, the components of the Pearson statistic, given by the
summands in Equation~(\ref{E:partition}), are asymptotically
independent chi-square statistics, each with one degree of freedom.
Furthermore, each component can be used individually as a separate
test. Since the first column is an eigenvector of the zero
eigenvalue, it contains no information. Using the second column, we
get the component test, which we will call $T_2$,

\begin{eqnarray*}
&&\frac1n\{\sqrt{\frac{p_2}{p_1}}\left(m_1-np_1\right)-\sqrt{\frac{p_1}{p_2}}\left(m_2-np_2\right)
   +\sqrt{\frac{p_4}{p_3}}\left(m_3-np_3\right)-\sqrt{\frac{p_3}{p_4}}\left(m_4-np_4\right)\\
   &&+\sqrt{\frac{p_6}{p_5}}\left(m_5-np_5\right)-\sqrt{\frac{p_5}{p_6}}\left(m_6-np_6\right)
   +\sqrt{\frac{p_8}{p_7}}\left(m_7-np_7\right)-\sqrt{\frac{p_7}{p_8}}\left(m_8-np_8\right)\}^2\\
\end{eqnarray*}
Noting that $T_2$, the square root of the above component test, is
asymptotically normal, and using
$\hat{p}_i=\displaystyle\frac{m_i}n$, the statistic $T_2$ reduces as
follows

\begin{eqnarray*}
\sqrt{T_2}&=&\sqrt{n}\left(\left(\sqrt{\frac{p_2}{p_1}}
\hat{p}_1-\sqrt{\frac{p_1}{p_2}}\hat{p}_2\right)
   +\left(\sqrt{\frac{p_4}{p_3}}\hat{p}_3-\sqrt{\frac{p_3}{p_4}}\hat{p}_4\right)\right.\\
   &+&\left(\sqrt{\frac{p_6}{p_5}}\hat{p}_5-\sqrt{\frac{p_5}{p_6}}\hat{p}_6\right)
   +\left.\left(\sqrt{\frac{p_8}{p_7}}\hat{p}_7-\sqrt{\frac{p_7}{p_8}}\hat{p}_8\right)\right).\\
\end{eqnarray*}

Similarly, the sixth and eighth vectors yield the asymptotically
normal component tests
\begin{eqnarray*}
T_6 &=&\sqrt{n}\left(\left(\sqrt{\frac{p_6}{p_1}}
\hat{p}_1-\sqrt{\frac{p_1}{p_6}}\hat{p}_6\right)
   +\left(\sqrt{\frac{p_5}{p_2}}\hat{p}_2-\sqrt{\frac{p_2}{p_5}}\hat{p}_5\right)\right.\\
   &-&\left(\sqrt{\frac{p_8}{p_3}}\hat{p}_3-\sqrt{\frac{p_3}{p_8}}\hat{p}_8\right)
   +\left.\left(\sqrt{\frac{p_7}{p_4}}\hat{p}_4-\sqrt{\frac{p_4}{p_7}}\hat{p}_7\right)\right).\\
\end{eqnarray*}
and
\begin{eqnarray*}
T_8&=&\sqrt{n}\left(\left(\sqrt{\frac{p_8}{p_1}}
\hat{p}_1-\sqrt{\frac{p_1}{p_8}}\hat{p}_8\right)
   -\left(\sqrt{\frac{p_7}{p_2}}\hat{p}_2-\sqrt{\frac{p_2}{p_7}}\hat{p}_7\right)\right.\\
   &+&\left(\sqrt{\frac{p_6}{p_3}}\hat{p}_3-\sqrt{\frac{p_3}{p_6}}\hat{p}_6\right)
   +\left.\left(\sqrt{\frac{p_5}{p_4}}\hat{p}_4-\sqrt{\frac{p_4}{p_5}}\hat{p}_5\right)\right).\\
\end{eqnarray*}

Notice that the mutual weighted differences between pairs of sample
proportions make it relatively easier to interpret many individual
components. Suppose a null hypothesis specifies a distribution $F$
and subdivide its support into $8$ bins with probabilities
$p_1,\cdots,p_8$ labeled from left to right. Generally, the first
three vectors (second, third and fourth) deal with local changes in
the distribution while the other four vectors compare bins from the
opposite sides of the support. In particular, in the examples above,
$T_2$ compares consecutive bins, so it is sensitive to the rate at
which a distribution increases or decays; while $T_8$ is sensitive
to a change in location.

To illustrate the significance of this partition into component
tests, the table below shows results of power simulation when the
null distribution is assumed to be $N(0,1)$. The actual data is
generated from several distributions with a change from standard
normal along a location shift, an increase in variance or an
increase in the tail. From each distribution in the table, $10000$
samples of size $n=200$ were generated from the distribution named
in the first column. The real axis was divided into eight
sub-domains whose standard normal probabilities from left to right
are equal to the specified multinomial vector. The table estimates
the power of the overall chi-square statistic as well as all the
components provided by the columns of the upper left matrix in
Table~\ref{T: LatHad4AND8}. It is worth noting that various
components behave drastically differently. The middle two rows (in
the upper part of the table) investigate the power with respect to
location shift. Components $T_4$ and $T_8$ are the only components
that are sensitive to this location shift, with $T_8$ outperforming
the $X^2_P$. One can also compare the power of the equally likely
case (Case (a)) to the other cases. The most important remark is
that the choice of the multinomial vector directly affects the power
of various components; notice how Case (c) allows all components to
pick up some nontrivial power. The lower part of the table compares
gamma generated values with given shape and scale parameters to a
null normal model with matched mean and standard deviation. This
case is meant to check which component is sensitive to skewness and
it clearly shows the superiority of each component test to the
global statistic $X^2_P$. Components $T_3$ and $T_5$ consistently
show a higher sensitivity. Most remarkably, changing the probability
vector can greatly increase the power of the same component test.
For example, Component $T_4$ becomes sensitive to skewness when the
probability vector is not uniform. Similarly, $T_8$ picks up a lot
of power against fat tails (Cauchy and t distribution) when the
probability vector (c) is used. This is also observed with
$N(0,1.3)$ data.

\begin{table}[h]
\begin{tabular}{c||c|c|c|c|c|c|c|c|c}\hline
    && $X^2_P$ & $T_2$ & $T_3$ & $T_4$ & $T_5$ &  $T_6$ &  $T_7$ &  $T_8$\\\hline
               & (a) & $0.859$ & $0.057$ & $0.057$ & $0.059$ & $0.629$ & $0.850$ & $0.524$ & $0.056$\\
    $N(0,1.3)$ & (b) & $0.964$ & $0.063$ & $0.088$ & $0.096$ & $0.763$ & $0.977$ & $0.329$ & $0.055$\\
               & (c) & $0.878$ & $0.052$ & $0.062$ & $0.289$ & $0.533$ & $0.788$ & $0.425$ & $0.405$\\\hline
               & (a) & $0.525$ & $0.058$ & $0.052$ & $0.058$ & $0.353$ & $0.577$ & $0.287$ & $0.053$\\
    $N(0,1.2)$ & (b) & $0.718$ & $0.058$ & $0.076$ & $0.078$ & $0.450$ & $0.812$ & $0.160$ & $0.055$\\
               & (c) & $0.559$ & $0.052$ & $0.059$ & $0.156$ & $0.287$ & $0.494$ & $0.221$ & $0.230$\\\hline
               & (a) & $0.478$ & $0.136$ & $0.068$ & $0.286$ & $0.059$ & $0.067$ & $0.059$ & $0.630$\\
    $N(0.2,1)$ & (b) & $0.489$ & $0.133$ & $0.058$ & $0.288$ & $0.056$ & $0.066$ & $0.048$ & $0.628$\\
               & (c) & $0.428$ & $0.133$ & $0.054$ & $0.181$ & $0.106$ & $0.152$ & $0.073$ & $0.494$\\\hline
               & (a) & $0.992$ & $0.371$ & $0.118$ & $0.789$ & $0.102$ & $0.166$ & $0.087$ & $0.994$\\
    $N(0.4,1)$ & (b) & $0.994$ & $0.385$ & $0.064$ & $0.817$ & $0.118$ & $0.237$ & $0.054$ & $0.996$\\
               & (c) & $0.990$ & $0.422$ & $0.091$ & $0.563$ & $0.358$ & $0.603$ & $0.212$ & $0.968$\\\hline
               & (a) & $0.798$ & $0.054$ & $0.055$ & $0.056$ & $0.632$ & $0.531$ & $0.727$ & $0.055$\\
    $t(2)$     & (b) & $0.996$ & $0.067$ & $0.089$ & $0.111$ & $0.979$ & $0.952$ & $0.925$ & $0.061$\\
               & (c) & $0.935$ & $0.086$ & $0.259$ & $0.049$ & $0.577$ & $0.666$ & $0.791$ & $0.355$\\\hline
               & (a) & $0.999$ & $0.053$ & $0.060$ & $0.056$ & $0.980$ & $0.942$ & $0.994$ & $0.055$\\
    $t(1)$     & (b) & $1.000$ & $0.083$ & $0.118$ & $0.150$ & $1.000$ & $1.000$ & $1.000$ & $0.054$\\
               & (c) & $0.999$ & $0.162$ & $0.593$ & $0.855$ & $0.969$ & $0.987$ & $0.996$ & $0.760$\\\hline\hline
               & (a) & $0.524$ & $0.055$ & $0.315$ & $0.082$ & $0.586$ & $0.158$ & $0.143$ & $0.088$\\
$gamma(5,1/5)$ & (b) & $0.859$ & $0.067$ & $0.561$ & $0.548$ & $0.673$ & $0.088$ & $0.061$ & $0.078$\\
               & (c) & $0.771$ & $0.045$ & $0.313$ & $0.337$ & $0.802$ & $0.101$ & $0.039$ & $0.091$
               \\\hline
               & (a) & $0.234$ & $0.055$ & $0.153$ & $0.063$ & $0.307$ & $0.122$ & $0.096$ & $0.064$\\
$gamma(10,1)$ & (b)  & $0.445$ & $0.063$ & $0.314$ & $0.273$ & $0.350$ & $0.097$ & $0.059$ & $0.068$\\
               & (c) & $0.335$ & $0.046$ & $0.158$ & $0.153$ & $0.480$ & $0.059$ & $0.040$ & $0.069$\\\hline
\end{tabular}
\caption{The values in each cell correspond, respectively from
above, to the simulated power when the multinomial probability
vector is proportional to (a) $(1,1,1,1,1,1,1,1)$, (b)
$(1,2,3,4,4,3,2,1)$, (c) $(1,2,3,4,1,2,3,4)$.}\label{T: PowerSim}
\end{table}

\begin{table}
\center
\begin{tabular}{cc} \tiny $\left(
\begin{tabular}{cccc}
$+\sqrt{p_1}$ &  $+\sqrt{p_2}$ &  $+\sqrt{p_3}$ &  $+\sqrt{p_4}$\\
$+\sqrt{p_2}$ &  $-\sqrt{p_1}$ &  $+\sqrt{p_4}$ &  $-\sqrt{p_3}$\\
$+\sqrt{p_3}$ &  $-\sqrt{p_4}$ &  $-\sqrt{p_1}$ &  $+\sqrt{p_2}$\\
$+\sqrt{p_4}$ &  $+\sqrt{p_3}$ &  $-\sqrt{p_2}$ &  $-\sqrt{p_1}$\\
\end{tabular}
\right)$ & \tiny $\left(
\begin{tabular}{cccc}
$+\sqrt{p_1}$ &  $+\sqrt{p_2}$ &  $+\sqrt{p_3}$ &  $+\sqrt{p_4}$\\
$+\sqrt{p_2}$ &  $-\sqrt{p_1}$ &  $-\sqrt{p_4}$ &  $+\sqrt{p_3}$\\
$+\sqrt{p_3}$ &  $+\sqrt{p_4}$ &  $-\sqrt{p_1}$ &  $-\sqrt{p_2}$\\
$+\sqrt{p_4}$ &  $-\sqrt{p_3}$ &  $+\sqrt{p_2}$ &  $-\sqrt{p_1}$\\
\end{tabular}
\right)$
\end{tabular}\vspace{12pt}
\begin{tabular}{cc} \hspace{-12pt}\tiny $\left(
\begin{tabular}{cccccccc}%
$+\check{p}_1$ &  $+\check{p}_2$ &  $+\check{p}_3$ &  $+\check{p}_4$ & $+\check{p}_5$ & $+\check{p}_6$ &  $+\check{p}_7$ &  $+\check{p}_8$\\
$+\check{p}_2$ &  $-\check{p}_1$ &  $-\check{p}_4$ &  $+\check{p}_3$ & $-\check{p}_6$ & $+\check{p}_5$ &  $-\check{p}_8$ &  $+\check{p}_7$\\
$+\check{p}_3$ &  $+\check{p}_4$ &  $-\check{p}_1$ &  $-\check{p}_2$ & $+\check{p}_7$ & $-\check{p}_8$ &  $-\check{p}_5$ &  $+\check{p}_6$\\
$+\check{p}_4$ &  $-\check{p}_3$ &  $+\check{p}_2$ &  $-\check{p}_1$ & $-\check{p}_8$ & $-\check{p}_7$ &  $+\check{p}_6$ &  $+\check{p}_5$\\
$+\check{p}_5$ &  $+\check{p}_6$ &  $-\check{p}_7$ &  $+\check{p}_8$ & $-\check{p}_1$ & $-\check{p}_2$ &  $+\check{p}_3$ &  $-\check{p}_4$\\
$+\check{p}_6$ &  $-\check{p}_5$ &  $+\check{p}_8$ &  $+\check{p}_7$ & $+\check{p}_2$ & $-\check{p}_1$ &  $-\check{p}_4$ &  $-\check{p}_3$\\
$+\check{p}_7$ &  $+\check{p}_8$ &  $+\check{p}_5$ &  $-\check{p}_6$ & $-\check{p}_3$ & $+\check{p}_4$ &  $-\check{p}_1$ &  $-\check{p}_2$\\
$+\check{p}_8$ &  $-\check{p}_7$ &  $-\check{p}_6$ &  $-\check{p}_5$ & $+\check{p}_4$ & $+\check{p}_3$ &  $+\check{p}_2$ &  $-\check{p}_1$\\
\end{tabular}
\right)$ &\hspace{-12pt} \tiny $\left(
\begin{tabular}{cccccccc}%
$+\check{p}_1$ &  $+\check{p}_2$ &  $+\check{p}_3$ &  $+\check{p}_4$ & $+\check{p}_5$ & $+\check{p}_6$ &  $+\check{p}_7$ &  $+\check{p}_8$\\
$+\check{p}_2$ &  $-\check{p}_1$ &  $-\check{p}_4$ &  $+\check{p}_3$ & $-\check{p}_6$ & $+\check{p}_5$ &  $-\check{p}_8$ &  $+\check{p}_7$\\
$+\check{p}_3$ &  $+\check{p}_4$ &  $-\check{p}_1$ &  $-\check{p}_2$ & $-\check{p}_7$ & $+\check{p}_8$ &  $+\check{p}_5$ &  $-\check{p}_6$\\
$+\check{p}_4$ &  $-\check{p}_3$ &  $+\check{p}_2$ &  $-\check{p}_1$ & $+\check{p}_8$ & $+\check{p}_7$ &  $-\check{p}_6$ &  $-\check{p}_5$\\
$+\check{p}_5$ &  $+\check{p}_6$ &  $+\check{p}_7$ &  $-\check{p}_8$ & $-\check{p}_1$ & $-\check{p}_2$ &  $-\check{p}_3$ &  $+\check{p}_4$\\
$+\check{p}_6$ &  $-\check{p}_5$ &  $-\check{p}_8$ &  $-\check{p}_7$ & $+\check{p}_2$ & $-\check{p}_1$ &  $+\check{p}_4$ &  $+\check{p}_3$\\
$+\check{p}_7$ &  $+\check{p}_8$ &  $-\check{p}_5$ &  $+\check{p}_6$ & $+\check{p}_3$ & $-\check{p}_4$ &  $-\check{p}_1$ &  $-\check{p}_2$\\
$+\check{p}_8$ &  $-\check{p}_7$ &  $+\check{p}_6$ &  $+\check{p}_5$ & $-\check{p}_4$ & $-\check{p}_3$ &  $+\check{p}_2$ &  $-\check{p}_1$\\
\end{tabular}
\right)$\\\vspace{12pt} \hspace{-14pt}\tiny $\left(
\begin{tabular}{cccccccc}%
$+\check{p}_1$ &  $+\check{p}_2$ &  $+\check{p}_3$ &  $+\check{p}_4$ & $+\check{p}_5$ & $+\check{p}_6$ &  $+\check{p}_7$ &  $+\check{p}_8$\\
$+\check{p}_2$ &  $-\check{p}_1$ &  $-\check{p}_4$ &  $+\check{p}_3$ & $-\check{p}_6$ & $+\check{p}_5$ &  $+\check{p}_8$ &  $-\check{p}_7$\\
$+\check{p}_3$ &  $+\check{p}_4$ &  $-\check{p}_1$ &  $-\check{p}_2$ & $-\check{p}_7$ & $-\check{p}_8$ &  $+\check{p}_5$ &  $+\check{p}_6$\\
$+\check{p}_4$ &  $-\check{p}_3$ &  $+\check{p}_2$ &  $-\check{p}_1$ & $-\check{p}_8$ & $+\check{p}_7$ &  $-\check{p}_6$ &  $+\check{p}_5$\\
$+\check{p}_5$ &  $+\check{p}_6$ &  $+\check{p}_7$ &  $+\check{p}_8$ & $-\check{p}_1$ & $-\check{p}_2$ &  $-\check{p}_3$ &  $-\check{p}_4$\\
$+\check{p}_6$ &  $-\check{p}_5$ &  $+\check{p}_8$ &  $-\check{p}_7$ & $+\check{p}_2$ & $-\check{p}_1$ &  $+\check{p}_4$ &  $-\check{p}_3$\\
$+\check{p}_7$ &  $-\check{p}_8$ &  $-\check{p}_5$ &  $+\check{p}_6$ & $+\check{p}_3$ & $-\check{p}_4$ &  $-\check{p}_1$ &  $+\check{p}_2$\\
$+\check{p}_8$ &  $+\check{p}_7$ &  $-\check{p}_6$ &  $-\check{p}_5$ & $+\check{p}_4$ & $+\check{p}_3$ &  $-\check{p}_2$ &  $-\check{p}_1$\\
\end{tabular}
\right)$ &\hspace{-12pt} \tiny $\left(
\begin{tabular}{cccccccc}%
$+\check{p}_1$ &  $+\check{p}_2$ &  $+\check{p}_3$ &  $+\check{p}_4$ & $+\check{p}_5$ & $+\check{p}_6$ &  $+\check{p}_7$ &  $+\check{p}_8$\\
$+\check{p}_2$ &  $-\check{p}_1$ &  $-\check{p}_4$ &  $+\check{p}_3$ & $-\check{p}_6$ & $+\check{p}_5$ &  $+\check{p}_8$ &  $-\check{p}_7$\\
$+\check{p}_3$ &  $+\check{p}_4$ &  $-\check{p}_1$ &  $-\check{p}_2$ & $+\check{p}_7$ & $+\check{p}_8$ &  $-\check{p}_5$ &  $-\check{p}_6$\\
$+\check{p}_4$ &  $-\check{p}_3$ &  $+\check{p}_2$ &  $-\check{p}_1$ & $+\check{p}_8$ & $-\check{p}_7$ &  $+\check{p}_6$ &  $-\check{p}_5$\\
$+\check{p}_5$ &  $+\check{p}_6$ &  $-\check{p}_7$ &  $-\check{p}_8$ & $-\check{p}_1$ & $-\check{p}_2$ &  $+\check{p}_3$ &  $+\check{p}_4$\\
$+\check{p}_6$ &  $-\check{p}_5$ &  $-\check{p}_8$ &  $+\check{p}_7$ & $+\check{p}_2$ & $-\check{p}_1$ &  $-\check{p}_4$ &  $+\check{p}_3$\\
$+\check{p}_7$ &  $-\check{p}_8$ &  $+\check{p}_5$ &  $-\check{p}_6$ & $-\check{p}_3$ & $+\check{p}_4$ &  $-\check{p}_1$ &  $+\check{p}_2$\\
$+\check{p}_8$ &  $+\check{p}_7$ &  $+\check{p}_6$ &  $+\check{p}_5$ & $-\check{p}_4$ & $-\check{p}_3$ &  $-\check{p}_2$ &  $-\check{p}_1$\\
\end{tabular}
\right)$\\\vspace{12pt} \hspace{-14pt}\tiny $\left(
\begin{tabular}{cccccccc}%
$+\check{p}_1$ &  $+\check{p}_2$ &  $+\check{p}_3$ &  $+\check{p}_4$ & $+\check{p}_5$ & $+\check{p}_6$ &  $+\check{p}_7$ &  $+\check{p}_8$\\
$+\check{p}_2$ &  $-\check{p}_1$ &  $-\check{p}_4$ &  $+\check{p}_3$ & $+\check{p}_6$ & $-\check{p}_5$ &  $+\check{p}_8$ &  $-\check{p}_7$\\
$+\check{p}_3$ &  $+\check{p}_4$ &  $-\check{p}_1$ &  $-\check{p}_2$ & $-\check{p}_7$ & $+\check{p}_8$ &  $+\check{p}_5$ &  $-\check{p}_6$\\
$+\check{p}_4$ &  $-\check{p}_3$ &  $+\check{p}_2$ &  $-\check{p}_1$ & $-\check{p}_8$ & $-\check{p}_7$ &  $+\check{p}_6$ &  $+\check{p}_5$\\
$+\check{p}_5$ &  $-\check{p}_6$ &  $+\check{p}_7$ &  $+\check{p}_8$ & $-\check{p}_1$ & $+\check{p}_2$ &  $-\check{p}_3$ &  $-\check{p}_4$\\
$+\check{p}_6$ &  $+\check{p}_5$ &  $-\check{p}_8$ &  $+\check{p}_7$ & $-\check{p}_2$ & $-\check{p}_1$ &  $-\check{p}_4$ &  $+\check{p}_3$\\
$+\check{p}_7$ &  $-\check{p}_8$ &  $-\check{p}_5$ &  $-\check{p}_6$ & $+\check{p}_3$ & $+\check{p}_4$ &  $-\check{p}_1$ &  $+\check{p}_2$\\
$+\check{p}_8$ &  $+\check{p}_7$ &  $+\check{p}_6$ &  $-\check{p}_5$ & $+\check{p}_4$ & $-\check{p}_3$ &  $-\check{p}_2$ &  $-\check{p}_1$\\
\end{tabular}
\right)$ &\hspace{-12pt} \tiny $\left(
\begin{tabular}{cccccccc}%
$+\check{p}_1$ &  $+\check{p}_2$ &  $+\check{p}_3$ &  $+\check{p}_4$ & $+\check{p}_5$ & $+\check{p}_6$ &  $+\check{p}_7$ &  $+\check{p}_8$\\
$+\check{p}_2$ &  $-\check{p}_1$ &  $-\check{p}_4$ &  $+\check{p}_3$ & $+\check{p}_6$ & $-\check{p}_5$ &  $-\check{p}_8$ &  $+\check{p}_7$\\
$+\check{p}_3$ &  $+\check{p}_4$ &  $-\check{p}_1$ &  $-\check{p}_2$ & $-\check{p}_7$ & $-\check{p}_8$ &  $+\check{p}_5$ &  $+\check{p}_6$\\
$+\check{p}_4$ &  $-\check{p}_3$ &  $+\check{p}_2$ &  $-\check{p}_1$ & $+\check{p}_8$ & $-\check{p}_7$ &  $+\check{p}_6$ &  $-\check{p}_5$\\
$+\check{p}_5$ &  $-\check{p}_6$ &  $+\check{p}_7$ &  $-\check{p}_8$ & $-\check{p}_1$ & $+\check{p}_2$ &  $-\check{p}_3$ &  $+\check{p}_4$\\
$+\check{p}_6$ &  $+\check{p}_5$ &  $+\check{p}_8$ &  $+\check{p}_7$ & $-\check{p}_2$ & $-\check{p}_1$ &  $-\check{p}_4$ &  $-\check{p}_3$\\
$+\check{p}_7$ &  $+\check{p}_8$ &  $-\check{p}_5$ &  $-\check{p}_6$ & $+\check{p}_3$ & $+\check{p}_4$ &  $-\check{p}_1$ &  $-\check{p}_2$\\
$+\check{p}_8$ &  $-\check{p}_7$ &  $-\check{p}_6$ &  $+\check{p}_5$ & $-\check{p}_4$ & $+\check{p}_3$ &  $+\check{p}_2$ &  $-\check{p}_1$\\
\end{tabular}
\right)$\\\vspace{12pt} \hspace{-14pt}\tiny $\left(
\begin{tabular}{cccccccc}%
$+\check{p}_1$ &  $+\check{p}_2$ &  $+\check{p}_3$ &  $+\check{p}_4$ & $+\check{p}_5$ & $+\check{p}_6$ &  $+\check{p}_7$ &  $+\check{p}_8$\\
$+\check{p}_2$ &  $-\check{p}_1$ &  $-\check{p}_4$ &  $+\check{p}_3$ & $+\check{p}_6$ & $-\check{p}_5$ &  $-\check{p}_8$ &  $+\check{p}_7$\\
$+\check{p}_3$ &  $+\check{p}_4$ &  $-\check{p}_1$ &  $-\check{p}_2$ & $+\check{p}_7$ & $+\check{p}_8$ &  $-\check{p}_5$ &  $-\check{p}_6$\\
$+\check{p}_4$ &  $-\check{p}_3$ &  $+\check{p}_2$ &  $-\check{p}_1$ & $-\check{p}_8$ & $+\check{p}_7$ &  $-\check{p}_6$ &  $+\check{p}_5$\\
$+\check{p}_5$ &  $-\check{p}_6$ &  $-\check{p}_7$ &  $+\check{p}_8$ & $-\check{p}_1$ & $+\check{p}_2$ &  $+\check{p}_3$ &  $-\check{p}_4$\\
$+\check{p}_6$ &  $+\check{p}_5$ &  $-\check{p}_8$ &  $-\check{p}_7$ & $-\check{p}_2$ & $-\check{p}_1$ &  $+\check{p}_4$ &  $+\check{p}_3$\\
$+\check{p}_7$ &  $+\check{p}_8$ &  $+\check{p}_5$ &  $+\check{p}_6$ & $-\check{p}_3$ & $-\check{p}_4$ &  $-\check{p}_1$ &  $-\check{p}_2$\\
$+\check{p}_8$ &  $-\check{p}_7$ &  $+\check{p}_6$ &  $-\check{p}_5$ & $+\check{p}_4$ & $-\check{p}_3$ &  $+\check{p}_2$ &  $-\check{p}_1$\\
\end{tabular}
\right)$ &\hspace{-12pt} \tiny $\left(
\begin{tabular}{cccccccc}%
$+\check{p}_1$ &  $+\check{p}_2$ &  $+\check{p}_3$ &  $+\check{p}_4$ & $+\check{p}_5$ & $+\check{p}_6$ &  $+\check{p}_7$ &  $+\check{p}_8$\\
$+\check{p}_2$ &  $-\check{p}_1$ &  $-\check{p}_4$ &  $+\check{p}_3$ & $+\check{p}_6$ & $-\check{p}_5$ &  $+\check{p}_8$ &  $-\check{p}_7$\\
$+\check{p}_3$ &  $+\check{p}_4$ &  $-\check{p}_1$ &  $-\check{p}_2$ & $+\check{p}_7$ & $-\check{p}_8$ &  $-\check{p}_5$ &  $+\check{p}_6$\\
$+\check{p}_4$ &  $-\check{p}_3$ &  $+\check{p}_2$ &  $-\check{p}_1$ & $+\check{p}_8$ & $+\check{p}_7$ &  $-\check{p}_6$ &  $-\check{p}_5$\\
$+\check{p}_5$ &  $-\check{p}_6$ &  $-\check{p}_7$ &  $-\check{p}_8$ & $-\check{p}_1$ & $+\check{p}_2$ &  $+\check{p}_3$ &  $+\check{p}_4$\\
$+\check{p}_6$ &  $+\check{p}_5$ &  $+\check{p}_8$ &  $-\check{p}_7$ & $-\check{p}_2$ & $-\check{p}_1$ &  $+\check{p}_4$ &  $-\check{p}_3$\\
$+\check{p}_7$ &  $-\check{p}_8$ &  $+\check{p}_5$ &  $+\check{p}_6$ & $-\check{p}_3$ & $-\check{p}_4$ &  $-\check{p}_1$ &  $+\check{p}_2$\\
$+\check{p}_8$ &  $+\check{p}_7$ &  $-\check{p}_6$ &  $+\check{p}_5$ & $-\check{p}_4$ & $+\check{p}_3$ &  $-\check{p}_2$ &  $-\check{p}_1$\\
\end{tabular}
\right)$\\\vspace{12pt} \hspace{-14pt}\tiny $\left(
\begin{tabular}{cccccccc}%
$+\check{p}_1$ &  $+\check{p}_2$ &  $+\check{p}_3$ &  $+\check{p}_4$ & $+\check{p}_5$ & $+\check{p}_6$ &  $+\check{p}_7$ &  $+\check{p}_8$\\
$+\check{p}_2$ &  $-\check{p}_1$ &  $+\check{p}_4$ &  $-\check{p}_3$ & $-\check{p}_6$ & $+\check{p}_5$ &  $-\check{p}_8$ &  $+\check{p}_7$\\
$+\check{p}_3$ &  $-\check{p}_4$ &  $-\check{p}_1$ &  $+\check{p}_2$ & $-\check{p}_7$ & $+\check{p}_8$ &  $+\check{p}_5$ &  $-\check{p}_6$\\
$+\check{p}_4$ &  $+\check{p}_3$ &  $-\check{p}_2$ &  $-\check{p}_1$ & $-\check{p}_8$ & $-\check{p}_7$ &  $+\check{p}_6$ &  $+\check{p}_5$\\
$+\check{p}_5$ &  $+\check{p}_6$ &  $+\check{p}_7$ &  $+\check{p}_8$ & $-\check{p}_1$ & $-\check{p}_2$ &  $-\check{p}_3$ &  $-\check{p}_4$\\
$+\check{p}_6$ &  $-\check{p}_5$ &  $-\check{p}_8$ &  $+\check{p}_7$ & $+\check{p}_2$ & $-\check{p}_1$ &  $-\check{p}_4$ &  $+\check{p}_3$\\
$+\check{p}_7$ &  $+\check{p}_8$ &  $-\check{p}_5$ &  $-\check{p}_6$ & $+\check{p}_3$ & $+\check{p}_4$ &  $-\check{p}_1$ &  $-\check{p}_2$\\
$+\check{p}_8$ &  $-\check{p}_7$ &  $+\check{p}_6$ &  $-\check{p}_5$ & $+\check{p}_4$ & $-\check{p}_3$ &  $+\check{p}_2$ &  $-\check{p}_1$\\
\end{tabular}
\right)$ &\hspace{-12pt} \tiny $\left(
\begin{tabular}{cccccccc}%
$+\check{p}_1$ &  $+\check{p}_2$ &  $+\check{p}_3$ &  $+\check{p}_4$ & $+\check{p}_5$ & $+\check{p}_6$ &  $+\check{p}_7$ &  $+\check{p}_8$\\
$+\check{p}_2$ &  $-\check{p}_1$ &  $+\check{p}_4$ &  $-\check{p}_3$ & $-\check{p}_6$ & $+\check{p}_5$ &  $+\check{p}_8$ &  $-\check{p}_7$\\
$+\check{p}_3$ &  $-\check{p}_4$ &  $-\check{p}_1$ &  $+\check{p}_2$ & $-\check{p}_7$ & $-\check{p}_8$ &  $+\check{p}_5$ &  $+\check{p}_6$\\
$+\check{p}_4$ &  $+\check{p}_3$ &  $-\check{p}_2$ &  $-\check{p}_1$ & $+\check{p}_8$ & $-\check{p}_7$ &  $+\check{p}_6$ &  $-\check{p}_5$\\
$+\check{p}_5$ &  $+\check{p}_6$ &  $+\check{p}_7$ &  $-\check{p}_8$ & $-\check{p}_1$ & $-\check{p}_2$ &  $-\check{p}_3$ &  $+\check{p}_4$\\
$+\check{p}_6$ &  $-\check{p}_5$ &  $+\check{p}_8$ &  $+\check{p}_7$ & $+\check{p}_2$ & $-\check{p}_1$ &  $-\check{p}_4$ &  $-\check{p}_3$\\
$+\check{p}_7$ &  $-\check{p}_8$ &  $-\check{p}_5$ &  $-\check{p}_6$ & $+\check{p}_3$ & $+\check{p}_4$ &  $-\check{p}_1$ &  $+\check{p}_2$\\
$+\check{p}_8$ &  $+\check{p}_7$ &  $-\check{p}_6$ &  $+\check{p}_5$ & $-\check{p}_4$ & $+\check{p}_3$ &  $-\check{p}_2$ &  $-\check{p}_1$\\
\end{tabular}
\right)$\\\vspace{12pt} \hspace{-14pt}\tiny $\left(
\begin{tabular}{cccccccc}%
$+\check{p}_1$ &  $+\check{p}_2$ &  $+\check{p}_3$ &  $+\check{p}_4$ & $+\check{p}_5$ & $+\check{p}_6$ &  $+\check{p}_7$ &  $+\check{p}_8$\\
$+\check{p}_2$ &  $-\check{p}_1$ &  $+\check{p}_4$ &  $-\check{p}_3$ & $-\check{p}_6$ & $+\check{p}_5$ &  $+\check{p}_8$ &  $-\check{p}_7$\\
$+\check{p}_3$ &  $-\check{p}_4$ &  $-\check{p}_1$ &  $+\check{p}_2$ & $+\check{p}_7$ & $+\check{p}_8$ &  $-\check{p}_5$ &  $-\check{p}_6$\\
$+\check{p}_4$ &  $+\check{p}_3$ &  $-\check{p}_2$ &  $-\check{p}_1$ & $-\check{p}_8$ & $+\check{p}_7$ &  $-\check{p}_6$ &  $+\check{p}_5$\\
$+\check{p}_5$ &  $+\check{p}_6$ &  $-\check{p}_7$ &  $+\check{p}_8$ & $-\check{p}_1$ & $-\check{p}_2$ &  $+\check{p}_3$ &  $-\check{p}_4$\\
$+\check{p}_6$ &  $-\check{p}_5$ &  $-\check{p}_8$ &  $-\check{p}_7$ & $+\check{p}_2$ & $-\check{p}_1$ &  $+\check{p}_4$ &  $+\check{p}_3$\\
$+\check{p}_7$ &  $-\check{p}_8$ &  $+\check{p}_5$ &  $+\check{p}_6$ & $-\check{p}_3$ & $-\check{p}_4$ &  $-\check{p}_1$ &  $+\check{p}_2$\\
$+\check{p}_8$ &  $+\check{p}_7$ &  $+\check{p}_6$ &  $-\check{p}_5$ & $+\check{p}_4$ & $-\check{p}_3$ &  $-\check{p}_2$ &  $-\check{p}_1$\\
\end{tabular}
\right)$ &\hspace{-12pt} \tiny $\left(
\begin{tabular}{cccccccc}%
$+\check{p}_1$ &  $+\check{p}_2$ &  $+\check{p}_3$ &  $+\check{p}_4$ & $+\check{p}_5$ & $+\check{p}_6$ &  $+\check{p}_7$ &  $+\check{p}_8$\\
$+\check{p}_2$ &  $-\check{p}_1$ &  $+\check{p}_4$ &  $-\check{p}_3$ & $-\check{p}_6$ & $+\check{p}_5$ &  $-\check{p}_8$ &  $+\check{p}_7$\\
$+\check{p}_3$ &  $-\check{p}_4$ &  $-\check{p}_1$ &  $+\check{p}_2$ & $+\check{p}_7$ & $-\check{p}_8$ &  $-\check{p}_5$ &  $+\check{p}_6$\\
$+\check{p}_4$ &  $+\check{p}_3$ &  $-\check{p}_2$ &  $-\check{p}_1$ & $+\check{p}_8$ & $+\check{p}_7$ &  $-\check{p}_6$ &  $-\check{p}_5$\\
$+\check{p}_5$ &  $+\check{p}_6$ &  $-\check{p}_7$ &  $-\check{p}_8$ & $-\check{p}_1$ & $-\check{p}_2$ &  $+\check{p}_3$ &  $+\check{p}_4$\\
$+\check{p}_6$ &  $-\check{p}_5$ &  $+\check{p}_8$ &  $-\check{p}_7$ & $+\check{p}_2$ & $-\check{p}_1$ &  $+\check{p}_4$ &  $-\check{p}_3$\\
$+\check{p}_7$ &  $+\check{p}_8$ &  $+\check{p}_5$ &  $+\check{p}_6$ & $-\check{p}_3$ & $-\check{p}_4$ &  $-\check{p}_1$ &  $-\check{p}_2$\\
$+\check{p}_8$ &  $-\check{p}_7$ &  $-\check{p}_6$ &  $+\check{p}_5$ & $-\check{p}_4$ & $+\check{p}_3$ &  $+\check{p}_2$ &  $-\check{p}_1$\\
\end{tabular}
\right)$\\\vspace{12pt} \hspace{-14pt}\tiny $\left(
\begin{tabular}{cccccccc}%
$+\check{p}_1$ &  $+\check{p}_2$ &  $+\check{p}_3$ &  $+\check{p}_4$ & $+\check{p}_5$ & $+\check{p}_6$ &  $+\check{p}_7$ &  $+\check{p}_8$\\
$+\check{p}_2$ &  $-\check{p}_1$ &  $+\check{p}_4$ &  $-\check{p}_3$ & $+\check{p}_6$ & $-\check{p}_5$ &  $-\check{p}_8$ &  $+\check{p}_7$\\
$+\check{p}_3$ &  $-\check{p}_4$ &  $-\check{p}_1$ &  $+\check{p}_2$ & $-\check{p}_7$ & $-\check{p}_8$ &  $+\check{p}_5$ &  $+\check{p}_6$\\
$+\check{p}_4$ &  $+\check{p}_3$ &  $-\check{p}_2$ &  $-\check{p}_1$ & $-\check{p}_8$ & $+\check{p}_7$ &  $-\check{p}_6$ &  $+\check{p}_5$\\
$+\check{p}_5$ &  $-\check{p}_6$ &  $+\check{p}_7$ &  $+\check{p}_8$ & $-\check{p}_1$ & $+\check{p}_2$ &  $-\check{p}_3$ &  $-\check{p}_4$\\
$+\check{p}_6$ &  $+\check{p}_5$ &  $+\check{p}_8$ &  $-\check{p}_7$ & $-\check{p}_2$ & $-\check{p}_1$ &  $+\check{p}_4$ &  $-\check{p}_3$\\
$+\check{p}_7$ &  $+\check{p}_8$ &  $-\check{p}_5$ &  $+\check{p}_6$ & $+\check{p}_3$ & $-\check{p}_4$ &  $-\check{p}_1$ &  $-\check{p}_2$\\
$+\check{p}_8$ &  $-\check{p}_7$ &  $-\check{p}_6$ &  $-\check{p}_5$ & $+\check{p}_4$ & $+\check{p}_3$ &  $+\check{p}_2$ &  $-\check{p}_1$\\
\end{tabular}
\right)$ &\hspace{-12pt} \tiny $\left(
\begin{tabular}{cccccccc}%
$+\check{p}_1$ &  $+\check{p}_2$ &  $+\check{p}_3$ &  $+\check{p}_4$ & $+\check{p}_5$ & $+\check{p}_6$ &  $+\check{p}_7$ &  $+\check{p}_8$\\
$+\check{p}_2$ &  $-\check{p}_1$ &  $+\check{p}_4$ &  $-\check{p}_3$ & $+\check{p}_6$ & $-\check{p}_5$ &  $+\check{p}_8$ &  $-\check{p}_7$\\
$+\check{p}_3$ &  $-\check{p}_4$ &  $-\check{p}_1$ &  $+\check{p}_2$ & $-\check{p}_7$ & $+\check{p}_8$ &  $+\check{p}_5$ &  $-\check{p}_6$\\
$+\check{p}_4$ &  $+\check{p}_3$ &  $-\check{p}_2$ &  $-\check{p}_1$ & $+\check{p}_8$ & $+\check{p}_7$ &  $-\check{p}_6$ &  $-\check{p}_5$\\
$+\check{p}_5$ &  $-\check{p}_6$ &  $+\check{p}_7$ &  $-\check{p}_8$ & $-\check{p}_1$ & $+\check{p}_2$ &  $-\check{p}_3$ &  $+\check{p}_4$\\
$+\check{p}_6$ &  $+\check{p}_5$ &  $-\check{p}_8$ &  $-\check{p}_7$ & $-\check{p}_2$ & $-\check{p}_1$ &  $+\check{p}_4$ &  $+\check{p}_3$\\
$+\check{p}_7$ &  $-\check{p}_8$ &  $-\check{p}_5$ &  $+\check{p}_6$ & $+\check{p}_3$ & $-\check{p}_4$ &  $-\check{p}_1$ &  $+\check{p}_2$\\
$+\check{p}_8$ &  $+\check{p}_7$ &  $+\check{p}_6$ &  $+\check{p}_5$ & $-\check{p}_4$ & $-\check{p}_3$ &  $-\check{p}_2$ &  $-\check{p}_1$\\
\end{tabular}
\right)$\\\vspace{12pt} \hspace{-14pt}\tiny $\left(
\begin{tabular}{cccccccc}
$+\check{p}_1$ &  $+\check{p}_2$ &  $+\check{p}_3$ &  $+\check{p}_4$ & $+\check{p}_5$ & $+\check{p}_6$ &  $+\check{p}_7$ &  $+\check{p}_8$\\
$+\check{p}_2$ &  $-\check{p}_1$ &  $+\check{p}_4$ &  $-\check{p}_3$ & $+\check{p}_6$ & $-\check{p}_5$ &  $+\check{p}_8$ &  $-\check{p}_7$\\
$+\check{p}_3$ &  $-\check{p}_4$ &  $-\check{p}_1$ &  $+\check{p}_2$ & $+\check{p}_7$ & $-\check{p}_8$ &  $-\check{p}_5$ &  $+\check{p}_6$\\
$+\check{p}_4$ &  $+\check{p}_3$ &  $-\check{p}_2$ &  $-\check{p}_1$ & $-\check{p}_8$ & $-\check{p}_7$ &  $+\check{p}_6$ &  $+\check{p}_5$\\
$+\check{p}_5$ &  $-\check{p}_6$ &  $-\check{p}_7$ &  $+\check{p}_8$ & $-\check{p}_1$ & $+\check{p}_2$ &  $+\check{p}_3$ &  $-\check{p}_4$\\
$+\check{p}_6$ &  $+\check{p}_5$ &  $+\check{p}_8$ &  $+\check{p}_7$ & $-\check{p}_2$ & $-\check{p}_1$ &  $-\check{p}_4$ &  $-\check{p}_3$\\
$+\check{p}_7$ &  $-\check{p}_8$ &  $+\check{p}_5$ &  $-\check{p}_6$ & $-\check{p}_3$ & $+\check{p}_4$ &  $-\check{p}_1$ &  $+\check{p}_2$\\
$+\check{p}_8$ &  $+\check{p}_7$ &  $-\check{p}_6$ &  $-\check{p}_5$ & $+\check{p}_4$ & $+\check{p}_3$ &  $-\check{p}_2$ &  $-\check{p}_1$\\
\end{tabular}
\right)$ &\hspace{-12pt} \tiny $\left(
\begin{tabular}{cccccccc}
$+\check{p}_1$ &  $+\check{p}_2$ &  $+\check{p}_3$ &  $+\check{p}_4$ & $+\check{p}_5$ & $+\check{p}_6$ &  $+\check{p}_7$ &  $+\check{p}_8$\\
$+\check{p}_2$ &  $-\check{p}_1$ &  $+\check{p}_4$ &  $-\check{p}_3$ & $+\check{p}_6$ & $-\check{p}_5$ &  $-\check{p}_8$ &  $+\check{p}_7$\\
$+\check{p}_3$ &  $-\check{p}_4$ &  $-\check{p}_1$ &  $+\check{p}_2$ & $+\check{p}_7$ & $+\check{p}_8$ &  $-\check{p}_5$ &  $-\check{p}_6$\\
$+\check{p}_4$ &  $+\check{p}_3$ &  $-\check{p}_2$ &  $-\check{p}_1$ & $+\check{p}_8$ & $-\check{p}_7$ &  $+\check{p}_6$ &  $-\check{p}_5$\\
$+\check{p}_5$ &  $-\check{p}_6$ &  $-\check{p}_7$ &  $-\check{p}_8$ & $-\check{p}_1$ & $+\check{p}_2$ &  $+\check{p}_3$ &  $+\check{p}_4$\\
$+\check{p}_6$ &  $+\check{p}_5$ &  $-\check{p}_8$ &  $+\check{p}_7$ & $-\check{p}_2$ & $-\check{p}_1$ &  $-\check{p}_4$ &  $+\check{p}_3$\\
$+\check{p}_7$ &  $+\check{p}_8$ &  $+\check{p}_5$ &  $-\check{p}_6$ & $-\check{p}_3$ & $+\check{p}_4$ &  $-\check{p}_1$ &  $-\check{p}_2$\\
$+\check{p}_8$ &  $-\check{p}_7$ &  $+\check{p}_6$ &  $+\check{p}_5$ & $-\check{p}_4$ & $-\check{p}_3$ &  $+\check{p}_2$ &  $-\check{p}_1$\\
\end{tabular}
\right)$\\
\end{tabular}
\caption{All possible orthogonal $4\times4$ and $8\times8$
Latin-Hadamard matrices. Here $\check{p}_i$ denotes $\sqrt{p_i}$,
for typographical reasons.}\label{T: LatHad4AND8}
\end{table}

\section{Connections with Algebra}\label{S:DivisionAlgebra}

In the first subsection we give an explanation to the non-existence
of valid colorings for size $16$ and higher. We then show, in
Subsection~\ref{SS:tradeoff} how to remedy this issue by taking some
of probability values to be equal, using the theory of orthogonal
designs.

\subsection{Division Algebras Connections}
An algebra over a field is a vector space that is equipped with a
bilinear multiplication operator. Three well-known algebras over the
field of real numbers are the two dimensional algebra of complex
numbers, the four dimensional algebra of quaternions, and the eight
dimensional non-associative algebra of octonions.The octonions are
constructed from the quaternions in a way similar to the
construction of the quaternions from the complex numbers, and the
construction of the latter from the real numbers. This sort of
construction can be repeated indefinitely via a classical doubling
procedure called the Cayley-Dickson construction. The next algebra
produced by doubling the octonions is the less known algebra of
sedenions, which has been mostly neglected by mathematicians,
perhaps because it is not a division algebra. That is, in this
algebra the equations $ax=b$ and $xa=b$ need not have a solution in
$x$. Equivalently, there exist zero divisors in the sedenions
algebra, i.e., there are elements $a\neq0$ and $b\neq0$ but $ab=0$.
This is not possible in the first three algebras above.

Usually the multiplication operation in an algebra is defined via a
multiplication table for elements of a basis. As an illustration,
the basis elements of quaternions are denoted by $\textbf{1}$,
$\textbf{i}$, $\textbf{j}$, $\textbf{k}$. Multiplication of
quaternions is defined by
$\textbf{i}^2=\textbf{j}^2=\textbf{k}^2=-1$,
$\textbf{ij}=\textbf{k}$, $\textbf{jk}=\textbf{i}$,
$\textbf{ki}=\textbf{j}$, $\textbf{ji}=-\textbf{k}$,
$\textbf{kj}=-\textbf{i}$, $\textbf{ik}=-\textbf{j}$. Denoting
$\textbf{1}$, $\textbf{i}$, $\textbf{j}$ and $\textbf{k}$ by $e_1$,
$e_2$, $e_3$ and $e_4$ respectively, the multiplication table can be
seen as a $4\times4$ matrix.
\begin{figure}[h!]
\center $\left(
\begin{tabular}{c|cccc}
      & $e_1$ & $e_2$ & $e_3$ & $e_4$\\
\hline
$e_1$ & $+e_1$ & $+e_2$ & $+e_3$ & $+e_4$\\
$e_2$ & $+e_2$ & $-e_1$ & $+e_4$ & $-e_3$\\
$e_3$ & $+e_3$ & $-e_4$ & $-e_1$ & $+e_2$\\
$e_4$ & $+e_4$ & $+e_3$ & $-e_2$ & $-e_1$
\end{tabular}
\right)$
\end{figure}

Note that the resulting $4\times4$ matrix consists of an orthogonal
matrix with the AB-BA property. A similar phenomenon is noticed if
we write the multiplication table of octonions. The multiplication
table of the sedenions also has the AB-BA property but the columns
are no more orthogonal.

A classical theorem in algebra, Hurwitz Theorem, states that the
only division algebras over the field of real numbers are the
algebras of real numbers, complex numbers, quaternions and
octonions, see \cite{Jacobson}, \cite{HypercomplexNumbers} for
example. The connection to our problem is established by seeing that
the existence of zero divisors is associated with the existence of
two columns that are not orthogonal. In effect, let $e_1,\cdots,
e_{16}$ be a basis of a $16$-dimensional algebra generated by a
coloring of our matrix. Suppose also that $e_1$ is the unit element
while the rest are the imaginary roots of $-1$. de Marrais
\cite{deMarrais} shows that the zero divisors are only of the form
$(e_i\pm e_j)$ for some $i,j\neq1$, see also
\cite{GuillermoMoreno97}. Consider some basis elements $k_1, k_2,
l_1,l_2$ all different from unity and suppose that
$k_1l_1=\alpha_{11}e_a$, $k_1l_2=\alpha_{12}e_b$,
$k_2l_1=\alpha_{21}e_{b^{\prime}}$,
$k_2l_2=\alpha_{22}e_{a^{\prime}}$, where the $\alpha's$ are $\pm1$.
Since the multiplication table without the signs makes a Latin
square, note that $a\neq b$, $a\neq b^{\prime}$, $a^{\prime}\neq b$
and $a^{\prime}\neq b^{\prime}$. If $a\neq a^{\prime}$ or $b\neq
b^{\prime}$ then one can readily check that none of the four
products $(k_1\pm k_2)(l_1\pm l_2)$ can be zero. For $a=a^{\prime}$
and $b=b^{\prime}$, $(k_1-k_2)(l_1+l_2)=0$ if and only if
$\alpha_{11}=\alpha_{22}$ and $\alpha_{12}=\alpha_{21}$. Likewise,
$(k_1+k_2)(l_1+l_2)=0$ if and only if $\alpha_{11}=-\alpha_{22}$ and
$\alpha_{12}=-\alpha_{21}$. In both scenarios, and using our
terminology, we do not obtain a `correct' coloring of the AB-BA
corners. In other words, the existence of a (complete) coloring of
the multiplication table is equivalent to the non-existence of zero
divisors. Hence, the non-existence of a Latin-Hadamard matrix of
order 16 or above rests on the deeper truth stated by the Hurwitz
theorem. The next section shows that this issue can be remedied.

%Looking more closely at the original Hurwitz problem, see \cite{Jacobson}, we see that it is concerned with finding integer values of $n$ such that $\|x\|\cdot\|y\|=\|z\|$
%where $\|\cdot\|$ is the Euclidean norm in $n$ dimensions and $x$, $y$ are arbitrary vectors of $R^n$ while each component $z_i$ of $z=(z_1,\ldots,z_n)$ is a quadratic form in the components
%of $x$ and $y$. The above equality is known, before Hurwitz, to be possible for the values $1,2,3,4$. Hurwitz Theorem asserts that these are the only possible values. For example,
%$(x_1^2+x_2^2)(y_1^2+y_2^2)=(x_1y_1-x_2y_2)^2+(x_1y_2+x_2y_1)^2$. When $n=4$, we have
%\begin{eqnarray*}
%z_1 & = & x_1y_1-x_2y_2-x_3y_3-x_4y_4\\
%z_2 & = & x_1y_2+x_2y_1+x_3y_4-x_4y_3\\
%z_3 & = & x_1y_3-x_2y_4+x_3y_1+x_4y_2\\
%z_4 & = & x_1y_4+x_2y_3-x_3y_2+x_4y_1\\
%\end{eqnarray*}

\subsection{Trade-Offs Via Orthogonal Designs}\label{SS:tradeoff}

We have seen that a Latin-Hadamard matrix allows for component tests
that compare pairs of sample proportions in various ways. When the
number of bins is $16$ or a higher power of two, we are faced with
the impossibility of having division algebras over the real numbers.
However, this can be overcome if we are willing to assume some prior
relations between the bin probabilities $p_1,\ldots,p_{16}$. An
obvious example that we already discussed is when we assume that all
of them are equal, in which case Hadamard matrices clearly exist for
all pure powers of two.

It turns out that we have discussed the two extreme cases: when all
cell probabilities are equal and when all are different. The latter
case is thought to occur when no a priori relations are assumed
between the cell probabilities. In both cases we have insisted that
each component test uses all cell counts, by requiring that the
matrix has no zero entries. The mathematical apparatus to study
cases other the two mentioned above is the theory of orthogonal
designs which we will briefly define next.

\begin{definition}
Given any integer $n$, we can write $n=2^a\cdot b$ for some integers
$a$ and $b$ where $b$ is odd. Writing $a=4c+d$; $0\leq d<4$, Radon's
function is defined as $\rho(n)=8c+2^d$.
\end{definition}

It is worth noting that $\rho(n)=n$ if $n=1,2,3,4$. Also
$\rho(16)=9$, $\rho(32)=10$, $\rho(64)=12$, etc.

\begin{definition}
An orthogonal design of order $n$ and type $(s_1,\ldots, s_l)$
($s_i>0$ and $0< l\leq n$ is an integer) on the commuting variables
$x_1,\ldots,x_l$ is an $n\times n$ matrix $A$ with entries from
$\{0,\pm x_1,\ldots,\pm x_l\}$ such that
\[ AA^t=\left(\sum_{i=1}^ls_ix_i^2\right)I_n
\]
\end{definition}
That is, the columns of $A$ are orthogonal and each column has
exactly $s_i$ entries of type $\pm x_i$. Radon's Theorem states the
following.\\

\noindent \textbf{Radon's Theorem}. Let $A$ be an orthogonal design
of order $n$ and type $(s_1,\ldots, s_l)$ on the variables
$x_1,\ldots,x_l$. Then $\rho(n)\geq l$.

It follows by this theorem, and the remark that $\rho(n)=n$ if and
only if $n=1,2,4,8$, that we can only find orthogonal designs of
order $n$ and $n$ variables when $n$ takes one of the four values,
1, 2, 4, 8.

For  a chi-square decomposition with relations between class
probabilities that are less restrictive than those of
Section~\ref{S:prelims}, we make use of orthogonal designs with
nonzero entries. I. Kotsireas and C. Koukouvinos \cite{KK2006}
obtain an orthogonal design of order 16 using the algebra of
sedenions. They start with $16$ indeterminate variables
$A,B,\ldots,P$, and imitate a construction by Becker and
Weispfenning \cite{Becker93} of Hadamard matrices using quaternions.
As a result of this direct imitation, they obtain a $16\times16$
matrix whose entries are polynomial functions in the indeterminate
variables and with identical diagonal entries. To obtain the
orthogonal design they set all non-diagonal entries to zero and thus
obtain a system of 42 equations. Applying a Gr\"{o}bner basis
technique, they obtain an equivalent but reduced set of equations.
Each solution of this reduced system yields a $16\times16$
orthogonal design. They show computationally that all solutions boil
down an at-most-9-variable orthogonal design. They report an
explicit orthogonal design matrix, which we show with
$\sqrt{p_1},\ldots,\sqrt{p_{9}}$ as the indeterminate variables
(with $\check{p}_i$ denoting $\sqrt{p}_i$).

\begin{figure}[h!]
\center \tiny $\left(
\begin{tabular}{cccccccccccccccc}
$\check{p}_1$ & $\check{p}_2$ & $\check{p}_3$ & $\check{p}_4$ & $\check{p}_5$ & $\check{p}_6$ & $\check{p}_7$ & $\check{p}_8$ & $\check{p}_9$ & $\check{p}_2$ & $\check{p}_3$ & $\check{p}_4$ & $\check{p}_5$ & $\check{p}_6$ & $\check{p}_7$ & $\check{p}_8$\\
$-\check{p}_2$ & $\check{p}_1$ & $-\check{p}_4$ & $\check{p}_3$ & $-\check{p}_6$ & $\check{p}_5$ & $\check{p}_8$ & $-\check{p}_7$ & $-\check{p}_2$ & $\check{p}_9$ & $\check{p}_4$ & $-\check{p}_3$ & $\check{p}_6$ & $-\check{p}_5$ & $-\check{p}_8$ & $\check{p}_7$\\
$-\check{p}_3$ & $\check{p}_4$ & $\check{p}_1$ & $-\check{p}_2$ & $-\check{p}_7$ & $-\check{p}_8$ & $\check{p}_5$ & $\check{p}_6$ & $-\check{p}_3$ & $-\check{p}_4$ & $\check{p}_9$ & $\check{p}_2$ & $\check{p}_7$ & $\check{p}_8$ & $-\check{p}_5$ & $-\check{p}_6$\\
$-\check{p}_4$ & $-\check{p}_3$ & $\check{p}_2$ & $\check{p}_1$ & $-\check{p}_8$ & $\check{p}_7$ & $-\check{p}_6$ & $\check{p}_5$ & $-\check{p}_4$ & $\check{p}_3$ & $-\check{p}_2$ & $\check{p}_9$ & $\check{p}_8$ & $-\check{p}_7$& $\check{p}_6$ & $-\check{p}_5$\\
$-\check{p}_5$ & $\check{p}_6$ & $\check{p}_7$ & $\check{p}_8$ & $\check{p}_1$ & $-\check{p}_2$ & $-\check{p}_3$ & $-\check{p}_4$ & $-\check{p}_5$ & $-\check{p}_6$ & $-\check{p}_7$ & $-\check{p}_8$ & $\check{p}_9$ & $\check{p}_2$ & $\check{p}_3$ & $\check{p}_4$\\
$-\check{p}_6$ & $-\check{p}_5$ & $\check{p}_8$ & $-\check{p}_7$ & $\check{p}_2$ & $\check{p}_1$ & $\check{p}_4$ & $-\check{p}_3$ & $-\check{p}_6$ & $\check{p}_5$ & $-\check{p}_8$ & $\check{p}_7$ & $\check{p}_2$ & $-\check{p}_9$ & $\check{p}_4$ & $-\check{p}_3$\\
$-\check{p}_7$ & $-\check{p}_8$ & $-\check{p}_5$ & $\check{p}_6$ & $\check{p}_3$ & $-\check{p}_4$ & $\check{p}_1$ & $\check{p}_2$ & $-\check{p}_7$ & $\check{p}_8$ & $\check{p}_5$ & $-\check{p}_6$ & $-\check{p}_3$ & $\check{p}_4$& $\check{p}_9$ & $-\check{p}_2$\\
$-\check{p}_8$ & $\check{p}_7$ & $-\check{p}_6$ & $-\check{p}_5$ & $\check{p}_4$ & $\check{p}_3$ & $-\check{p}_2$ & $\check{p}_1$ & $-\check{p}_8$ & $-\check{p}_7$ & $\check{p}_6$ & $\check{p}_5$ & $-\check{p}_4$ & $-\check{p}_3$ & $\check{p}_2$ & $\check{p}_9$\\
$-\check{p}_9$ & $\check{p}_2$ & $\check{p}_3$ & $\check{p}_4$ & $\check{p}_5$ & $\check{p}_6$ & $\check{p}_7$ & $\check{p}_8$ & $\check{p}_1$ & $-\check{p}_2$ & $-\check{p}_3$ & $-\check{p}_4$ & $-\check{p}_5$ & $-\check{p}_6$ & $-\check{p}_7$ & $-\check{p}_8$\\
$-\check{p}_2$ & $-\check{p}_9$ & $\check{p}_4$ & $-\check{p}_3$ & $\check{p}_6$ & $-\check{p}_5$ & $-\check{p}_8$ & $\check{p}_7$ & $\check{p}_2$ & $\check{p}_1$ & $\check{p}_4$ & $-\check{p}_3$ & $\check{p}_6$ & $-\check{p}_5$ & $-\check{p}_8$ & $\check{p}_7$\\
$-\check{p}_3$ & $-\check{p}_4$ & $-\check{p}_9$ & $\check{p}_2$ & $\check{p}_7$ & $-\check{p}_8$ & $-\check{p}_5$ & $\check{p}_6$ & $\check{p}_3$ & $-\check{p}_4$ & $\check{p}_1$ & $\check{p}_2$ & $\check{p}_7$ & $\check{p}_8$ & $-\check{p}_5$ & $-\check{p}_6$\\
$-\check{p}_4$ & $\check{p}_3$ & $-\check{p}_2$ & $-\check{p}_9$ & $\check{p}_8$ & $-\check{p}_7$ & $\check{p}_6$ & $-\check{p}_5$ & $\check{p}_4$ & $\check{p}_3$ & $-\check{p}_2$ & $\check{p}_1$ & $\check{p}_8$ & $-\check{p}_7$& $\check{p}_6$ & $-\check{p}_5$\\
$-\check{p}_5$ & $-\check{p}_6$ & $-\check{p}_7$ & $-\check{p}_8$ & $-\check{p}_9$ & $\check{p}_2$ & $\check{p}_3$ & $\check{p}_4$ & $\check{p}_5$ & $-\check{p}_6$ & $-\check{p}_7$ & $-\check{p}_8$ & $\check{p}_1$ & $\check{p}_2$ & $\check{p}_3$ & $\check{p}_4$\\
$-\check{p}_6$ & $\check{p}_5$ & $-\check{p}_8$ & $\check{p}_7$ & $-\check{p}_2$ & $-\check{p}_9$ & $-\check{p}_4$ & $\check{p}_3$ & $\check{p}_6$ & $\check{p}_5$ & $-\check{p}_8$ & $\check{p}_7$ & $-\check{p}_2$ & $\check{p}_1$ & $-\check{p}_4$ & $\check{p}_3$\\
$-\check{p}_7$ & $\check{p}_8$ & $\check{p}_5$ & $-\check{p}_6$ & $-\check{p}_3$ & $\check{p}_4$ & $-\check{p}_9$ & $-\check{p}_2$ & $\check{p}_7$ & $\check{p}_8$ & $\check{p}_5$ & $-\check{p}_6$ & $-\check{p}_3$ & $\check{p}_4$& $\check{p}_1$ & $-\check{p}_2$\\
$-\check{p}_8$ & $-\check{p}_7$ & $\check{p}_6$ & $\check{p}_5$ & $-\check{p}_4$ & $-\check{p}_3$ & $\check{p}_2$ & $-\check{p}_9$ & $\check{p}_8$ & $-\check{p}_7$ & $\check{p}_6$ & $\check{p}_5$ & $-\check{p}_4$ & $-\check{p}_3$ & $\check{p}_2$ & $\check{p}_1$\\
\end{tabular}
\right)$
\end{figure}

The transpose of this matrix is exactly what we need to decompose
the chi-square statistic. The columns are all orthogonal and the
entries still enjoy the AB-BA property, and with all non-zero
probabilities allowing for each component test to use all
multinomial cell counts.

The method above can be extended to higher Cayley-Dickson algebras
via similar calculations. Moreover, it can be applied to every
multiplication table provided by our colored matrices.

\section{Appendix}
In this appendix we show how to `color' a $2^k\times2^k$ matrix
${\bf S}_k$ with the AB-BA property (given in
Proposition~\ref{P:rectangleProperty}) in a way that produces
orthogonal columns, thus proving Theorem~\ref{T:LatinHadamard}. This
is done in two steps. First we present a systematic procedure to
color the entries so that in every set of four AB-BA corners,
exactly one is colored with a distinct color (i.e. one `+' and three
`-'s or one `-' and three `+'s). In practice, the matrix is
completely colored before checking all AB-BA corners. Therefore,
once all entries are colored, we need to check that all columns are
orthogonal. The steps detailed below yield all possible cases. For
$w=1,\cdots, k$ we will represent ${\bf S}_w$ as a block matrix
$\left[
\begin{tabular}{ll}
${\bf A_1}$ & ${\bf B_1}$\\
${\bf B_2}$ & ${\bf A_2}$
\end{tabular}
\right]$, as in Lemma~\ref{L:Latin}(c), with indices introduced for
easy referencing.

First we color all entries of the first column and first row by `+'.
The following remarks are crucial. Since the diagonal entries all
have value 1, they must obviously be colored by `-'. The diagonal
entries of ${\bf B}$ are all equal to $2^{w-1}+1$. Hence, for
$i=1,\cdots, 2^{w-1}$,
\begin{eqnarray*}
{\bf S}_{1,2^{w-1}+1}&=&{\bf S}_{i,2^{w-1}+i},\\
{\bf S}_{1,2^{w-1}+i}&=&{\bf S}_{i,2^{w-1}+1}.
\end{eqnarray*}
Since the two entries on the left are already colored with `+', and
since these four entries form AB-BA corners, the two entries on the
right must have distinct colors. Similarly, the four entries ${\bf
S}_{i,1}$, ${\bf S}_{i,2^{w-1}+i}$, ${\bf S}_{2^{w-1}+1,1}$ and
${\bf S}_{2^{w-1}+1,2^{w-1}+i}$ form a $2\times2$ submatrix whose
 entries are AB-BA corners. Since the first and third are
colored by `+' (as they belong to the first column), it is evident
that ${\bf S}_{i,2^{w-1}+i}$ and ${\bf S}_{2^{w-1}+1,2^{w-1}+i}$
must have different colors.

In a like manner, by alternately using the first row and the first
column to color a new entry, we see that the sequence of entries
\[
{\bf S}_{i,2^{w-1}+1}, {\bf S}_{i,2^{w-1}+i}, {\bf
S}_{2^{w-1}+1,2^{w-1}+i}, {\bf S}_{2^{w-1}+1,i}, {\bf
S}_{2^{w-1}+i,i}, {\bf S}_{2^{w-1}+i,2^{w-1}+1}, {\bf
S}_{i,2^{w-1}+1}.
\]

must have alternating colors. Notice that by doing this we cycle
back to the first entry, and that the sign allocation is consistent
because the number of different entries in the sequence is even
(namely 6.)

The above remarks suggest a recursive strategy to color the
$2^k\times2^k$ matrix ${\bf S}_k$ by coloring the $2^w\times2^w$
upper left submatrix progressively as $w$ increases from $2$ to $k$.

(1) Base Case: Since the first column and first row are all colored
with `+', and consequently the diagonal entries that are not on the
first row are colored with `-', the $2\times2$ upper left submatrix
is colored.

(2) Recursive Step: Suppose the $2^{w-1}\times2^{w-1}$ upper left
submatrix is colored, for some $w>2$. We choose an arbitrary color
for each entry ${\bf S}_{i,2^{w-1}+1}$, $i=2,\cdots,2^{w-1}$. By the
remark given above, each entry just colored leads to coloring a
total of six entries. Noting that two of these entries belong to the
$(2^{w-1}+1)^{st}$ row and two entries belong to the
$(2^{w-1}+1)^{st}$ column, it follows--in particular--that the
$(2^{w-1}+1)^{st}$ row and the $(2^{w-1}+1)^{st}$ column are colored
completely. Next, we are ready to color the entries of block ${\bf
B}_2$. For $i=1,\cdots,2^{w-1}$ and $j=2,\cdots,2^{w-1}$, if the
entry ${\bf S}_{2^{w-1}+i,j}$ has not been colored, we observe that
the absolute value of this entry exists in the first column of ${\bf
B}_1$, which is the $(2^{w-1}+1)^{st}$ column of ${\bf S}_w$. Hence,
for an appropriate $i^{\prime}$ in $\{1,\cdots, 2^{w-1}\}$, ${\bf
S}_{i^{\prime},2^{w-1}+1}={\bf S}_{2^{w-1}+i,j}$. It follows, by the
AB-BA property, that
\[
{\bf S}_{i^{\prime},j}={\bf S}_{2^{w-1}+i,2^{w-1}+1}.
\]
Since exactly three of the above four entries are colored, the
fourth will be colored accordingly.

At this point, using the fact that diagonal entries are `-1`, and
that the matrix ${\bf S}_w$ is symmetric, all uncolored entries of
${\bf B}_1$ are colored as follows. For $i=2,\cdots,2^{w-1}$,
$j=2,\cdots,2^{w-1}$
\[
{\bf S}_{i,2^{w-1}+j}=-{\bf S}_{2^{w-1}+j,i}.
\]
Lastly, once ${\bf A}_1$, ${\bf B}_1$,  and ${\bf B}_2$ are all
colored, coloring ${\bf A}_2$ is immediate. This concludes the
construction.

To count the number of potential colorings of a $2^k\times2^k$
matrix, observe that we start choosing random colors when $w=2$ to
$w=k$. We get $(2^1-1)+\cdots+(2^{k-1}-1)=2^k-(k+1)$. Therefore the
number of potential colorings is $2^{2^k-(k+1)}$.

Our scheme for coloring need not produce all orthogonal columns. So
further checking is needed. It turns out that the 2 colorings of
${\bf S}_2$ and the 16 colorings of ${\bf S}_3$ all yield matrices
with orthogonal columns. However, none of the 2048 colorings of
${\bf S}_4$ gives all orthogonal columns.

\end{document}